\date{
October 31, 2014%
}
\newcounter{noqed}
\newcommand{\qed}{ \ifmmode\mbox{ }\fi\rule[-.05em]{.3em}{.7em}\setcounter{noqed}{0}}
\newenvironment{proof}[1][{}]{\noindent{\bf Proof#1. }\setcounter{noqed}{1}}{\ifnum\value{noqed}=1\qed\fi\par\medskip}
\newcounter{prgline}
\newcommand{\tauh}{\tau_{\mathrm h}}
\newcommand{\pl}{\theprgline\addtocounter{prgline}{1}}
\renewcommand{\epsilon}{\varepsilon}
\newcommand{\ww}{w}
\def\..{\,\mathpunct{\ldotp\ldotp}} 
\newcommand{\IF}{\text{\textbf{if}}\xspace}
\newcommand{\FI}{\text{\textbf{fi}}\xspace}
\newcommand{\FOR}{\text{\textbf{for}}\xspace}
\newcommand{\END}{\text{\textbf{end}}\xspace}
\newcommand{\DO}{\text{\textbf{do}}\xspace}
\newcommand{\OD}{\text{\textbf{od}}\xspace}
\newcommand{\THEN}{\text{\textbf{then}}\xspace}
\newcommand{\ELSE}{\text{\textbf{else}}\xspace}
\newcommand{\RETURN}{\text{\textbf{return}}\xspace}
\newcommand{\LAND}{\text{\textbf{and}}\xspace}
\newcommand{\WHILE}{\text{\textbf{while}}\xspace}
\newcommand{\FUNCTION}{\text{\textbf{function}}\xspace}
\newcommand{\pp}{\raisebox{.08em}{\text{\small++}}}
\newtheorem{theorem}{Theorem}
\newcommand{\signum}{\operatorname{sgn}}
\newcommand{\br}{\bm r}
\newcommand{\bs}{\bm s}
\newcommand{\R}{\mathbf R}
\newcommand{\la}{\langle}
\newcommand{\ra}{\rangle}
\title{A Weighted Correlation Index for Rankings with Ties}
\author{%
Sebastiano Vigna\thanks{Sebastiano Vigna has been supported by the
EU-FET grant NADINE (GA 288956).}\\
Universit\`a degli Studi di Milano, Italy}
\begin{document}
\clubpenalty=10000 
\widowpenalty = 10000
\bibliographystyle{plain}
\maketitle

\begin{abstract}
Understanding the correlation between two different scores for the same set of items
is a common problem in graph analysis and information retrieval. The most commonly used
statistics that quantifies this correlation is Kendall's $\tau$; however, the
standard definition fails to capture that discordances between items with high
rank are more important than those between items with low rank. Recently,
a new measure of correlation based on~\emph{average precision} has been 
proposed to solve this problem, but like many alternative proposals in
the literature it assumes that there are \emph{no ties} in the scores. This is a major deficiency
in a number of contexts, and in particular while comparing centrality scores on large graphs,
as the obvious baseline, indegree, has a very large number of ties in social networks and web graphs. We propose to
extend Kendall's definition in a natural way to take into account weights in the
presence of ties. We prove a number of interesting mathematical properties of
our generalization and describe
an $O(n\log n)$ algorithm for its computation. We also validate the usefulness
of our weighted measure of correlation using experimental data on social networks and web graphs.
\end{abstract}

%
%


\section{Introduction}

In information retrieval, one is often faced with different scores\footnote{We
purposely and consistently use ``score'' to denote real numbers associated to
items, and ``rank'' to denote ordinal positions. The two terms are used somewhat
interchangeably in the literature, but in this paper the distinction is
important as we assume that scores of different items can be identical.} for the
same set of items.
This includes the lists of documents returned by different search engines and their
associated relevance scores, the lists of query recommendation returned by different algorithms, and also the
score associated to each node of a graph by different centrality measures (e.g.,
indegree and Bavelas's closeness~\cite{BavCPTOG}).

In most of the literature, the scores are assumed to be without ties, thus
inducing a \emph{ranking} of the elements. At that point, correlation statistics
such as Spearman's rank correlation coefficient~\cite{SpePMATT} and Kendall's
$\tau$~\cite{KenNMRC} can be used to evaluate the similarity of the rankings.
Spearman's correlation coefficient is equivalent to the traditional linear
correlation coefficient computed on ranks of items. Kendall's $\tau$, instead,
is proportional to the number of pairwise adjacent swaps needed to convert one
ranking into the other.

For a number of reasons, Kendall's $\tau$ has become a standard statistic to
compare the correlation between two ranked lists. Such reasons include fast computation
($O(n \log n)$, where $n$ is the length of the list, using Knight's algorithm~\cite{KnCMCKTUD}),
and the existence of a variant that takes care of ties~\cite{KenTTRP}.

The explicit treatment of ties is of great importance when comparing global
\emph{exogenous} relevance scores in large collections of web documents. The
baseline of such scores is indegree---the number of documents containing
hypertextual link to a given document. More sophisticated approaches include
Katz's index~\cite{KatNSIDSA}, PageRank~\cite{PBMPCR}, and countless variants.
Due to the highly skewed indegree distribution, a very large number of documents
share the same indegree, and the same happens of many other scores: it is thus
of uttermost importance that the evaluation of correlation takes into account
ties as first-class citizens.

On the other hand, Kendall's $\tau$ has some known problems that motivated the
introduction of several weighted variants. In particular, a striking difference
often emerges between the anecdotal evidence of the top elements by different scores
being almost identical, and the $\tau$ value being quite low. This is due
to a known phenomenon: the scores of important items tend to be highly correlated
in all reasonable rankings, whereas most of the remaining items are ranked in slightly different
ways, introducing a large amount of noise, yielding a low $\tau$ value.

This problem motivates the definition of correlation statistics that consider
more important correlation between highly ranked items. In particular, recently Yilmaz, Aslam
and Robertson introduced a statistics, named \emph{AP (average precision)
correlation}~\cite{YARNRCCIR}, which aims at considering more important swaps
between highly ranked items. The need for such a measure is very well motivated in the
introduction of their paper, and we will not repeat here their detailed
discussion.

In this paper, we aim at providing a measure of correlation in the
same spirit of the definition oh Yilmaz, Aslam and Robertson, but taking smoothly ties into account. We will actually
define a general notion of weighting for Kendall's $\tau$, and develop its mathematical properties.
Since it is important that such a statistics is computable on very large data
sets, we will provide a generalization of Knight's algorithm that can be applied
whenever the weighting depends additively or multiplicatively on a weight
assigned to each item. The same algorithm can be used to compute AP correlation
in time $O(n\log n)$.

All data and software used in this paper are available as part of the LAW
software library under the GNU General Public
License.\footnote{\url{http://law.di.unimi.it/}}

\section{Related work}

Shieh~\cite{ShiWKTS} wrote the one of the first papers proposing a generic weighting of Kendall's $\tau$. She
assumes from the very start that there are no ties, and assign to the exchange between $i$ and $j$ a weight $w_{ij}$.
Her motivation is the \emph{fidelity evaluation of software packages for structural engineering}, in which a set 
of variables is ranked in two different ways, and one would like to emphasize agreement on the most important ones.
In particular, she concentrates on weights given by the product of two weights associated with the elements participating
in the exchange. Our work can be seen as a generalization of her approach,
albeit we combine weights differently.

Kumar and Vassilvitskii~\cite{KuVGDBR} study a definition that extends Shieh's
taking into account \emph{position weights} and \emph{similarity between
elements}. Again, they assume that ties are broken arbitrarily, which is an
unacceptable assumption if large sets of elements have the same score. Fagin,
Kumar and Sivakumar~\cite{FKSCTKL} use instead \emph{penalty weights} to apply Kendall's $\tau$ just to
the top $k$ elements of two ranked lists (with no ties). Exchanges partially or completely outside the
top $k$ elements obtain different weights.

Finally, the recent quoted work of Yilmaz, Aslam and Robertson~\cite{YARNRCCIR} on AP correlation
is the closest to ours in motivation and methodology, albeit targeted at ranked lists 
with no ties.

We remark that analogous research exists in association with Spearman's
correlation: Iman and Conover~\cite{ImCMTDC}, for example, study the usage of
\emph{Savage scores}~\cite{SavCTROS} instead of ranks when comparing ranked lists.
Savage scores for a ranked list of $n$ elements are given by $\sum_{j=i}^n 1/j$,
where $i$ is the rank (starting at one) of an element. Spearman's correlation
applied to Savage scores considers more important elements at the top of a
ranked list.

Recently, Webber, Moffat and Zobel~\cite{WMZSMIR} have described a similarity
measure for \emph{indefinite rankings}---rankings that might have different
lengths and contain different elements. Their work has some superficial
resemblance with the approach of~\cite{KuVGDBR,YARNRCCIR} and our work, as it
give preeminence to differences at the top of ranked lists, but it is not
technically a correlation index, as it is based on measuring overlaps of
infinite lists, rather than on exchanges. Thus, the basic condition for a
correlation index (i.e., that inverting the list one obtains the minimum
possible correlation, usually standardized to $-1$), is not even expressible in
their framework.
Moreover, their measure, being defined on infinite lists, needs the
fundamental assumption that the weight function applied to overlaps 
must be \emph{summable}; in particular, they make importance decrease
exponentially.
As we will discuss in Section~\ref{sec:choosing}, and verify experimentally in
Section~\ref{sec:experiments}, such a choice is a reasonable framework for very
short lists, or when only very first elements are relevant (e.g., because one
is modeling user behavior), but it would completely flatten the results of our
correlation index on large examples, depriving it from its discriminatory power,
even if the weight function would decrease just quadratically.

A fascinating proposal, entirely orthogonal to the ones we discussed, is the
idea of weighting Kemeny's distance between permutations proposed by Farnoud and
Milenkovic~\cite{FaMAACDRCA}.
In this proposal, Kemeny's distance between two permutations $\pi$ and $\sigma$
is characterized as the minimum number of \emph{adjacent transpositions} (i.e.,
transpositions of the form $(i\,i+1)$) that turn $\pi$ into $\sigma$.
At this point, one can define a \emph{weight} associated to each adjacent
transposition, and by assigning larger weights to adjacent transpositions
with smaller indices one can make differences in the top part of the
permutations more important than differences in the bottom part. The right
notion of weighted distance turns out to be the minimum sum of weights of a
sequence of adjacent transposition that turn $\pi$ into $\sigma$. The
interesting property of this approach is that avoids the need for a \emph{ground
truth} (an intrinsic notion of importance of an element), which is necessary,
implicitly or explicitly, to weigh an exchange in the approaches
of~\cite{ShiWKTS,YARNRCCIR} and in the one discussed in this paper.
The main drawbacks, presently, are that the weight assignment is not very
intuitive (as it is related to positions, rather than to elements) and that more
work is needed to extend the distance into a proper correlation index in the case of ties.

\section{Motivation}
\label{sec:motivation}

The need for weighted correlation measures in the case of ranked list has been
articulated in detail in previous work. Here we will focus on the case of
centrality measures for graphs. Consider the graph of English
Wikipedia\footnote{More precisely, a specific snapshot of Wikipedia that will be made 
public by the author. The graph does not contain template pages.}, which has about
four million nodes and one hundred million arcs.
In this graph, $99.95$\% of the nodes have the same indegree of some other
node---for example, more than a half million node has indegree one. It is
clearly mandatory, when
 computing the correlation of other scores with indegree, that ties are taken
 into consideration in a systematic way
(e.g., not broken arbitrarily).

We will consider four other commonly used scores based on the adjacency matrix
$A$ of the Wikipedia graph.
One is PageRank~\cite{PBMPCR}, which is defined by
\[
\mathbf 1/n \sum_{k\geq0} (\alpha\bar A)^k,
\]
where $\alpha\in[0\..1)$ is a \emph{damping factor} and $\bar A$ is a stochasticization of
$A$: every row not entirely made of zeroes is divided by its sum, so to have $\ell_1$ norm one.

The other index we consider is Katz's~\cite{KatNSIDSA}, which is defined by
\[
\mathbf 1 \sum_{k\geq0} (\alpha A)^k,
\]
where $\alpha\in[0\..1/\lambda)$ is an attenuation factor depending on
$\lambda$, the dominant eigenvalue of $A$~\cite{MeyMAALA}.
In both cases, we take $\alpha$ in the middle of the allowed interval (using
different values does not change the essence of what follows, unless they are
extreme).

A different kind of score is provided by Bavelas's \emph{closeness}. The closeness of $x$ is defined by 
\[ \frac1{\sum_{d(y,x)<\infty} d(y,x)},\]
where $d(-,-)$ denotes the usual graph distance. Note that we
have to eliminate nodes at infinite distance to avoid zeroing all scores. By
definition the closeness of a node with indegree zero is zero. Finally, we
consider \emph{harmonic centrality}~\cite{BoVAC}, a modified version of Bavelas's closeness designed for directed graphs that are not
strongly connected; the harmonic centrality of $x$ is defined by \[       
\sum_{y\neq x}\frac1{d(y,x)}.\] 

These scores provide an interesting mix: indegree is an obvious baseline, and
entirely local. PageRank and Katz are similar in their definition, but the
normalization applied to $A$ makes the scores quite different (at least in
theory).
Finally, closeness and harmonic centrality are of a completely
different nature, having no connection with dominant eigenvectors or Markov chains.

Our first empirical observation is that, looking just at the very top pages of
Wikipedia (Table~\ref{tab:topwiki}; entries in boldface are unique to the list
they belong to, here and in the following), we perceive these scores as almost
identical, except for closeness, which displays almost random values. The latter behavior is a known phenomenon:
nodes that are almost isolated obtain a very high closeness score (this is why
harmonic centrality was devised).
We note also that harmonic centrality has a slightly different slant,
as it is the only ranking including Latin, Europe, Russia and the Catholic Church in the top $20$.

The problem is that these facts are not reflected in any way in the values of
Kendall's $\tau$ shown in Table~\ref{tab:kendall}.
If we exclude closeness, with the exception of the correlation between indegree
and Katz, all other correlation value fail to surpass the $0.9$ threshold,
usually considered the threshold for considering two rankings
equivalent~\cite{VooEHRD}. Actually, they are below the threshold
$0.8$, under which we are supposed to see considerable changes. The correlation
of closeness with harmonic centrality, moreover, is even more
pathological:
it is the \emph{largest} correlation.

An obvious observation is that, maybe, the score is lowered by a large
discordance in the rest of the rankings.
Table~\ref{tab:topscientist} tries to verify this intuition by listing the top
pages that are associated with the Wordnet category ``scientist'' 
in the Yago2 ontology data~\cite{HSBY2}. These pages have
considerably lower score (their rank is below 300), yet the first three
rankings are almost identical. Harmonic centrality is still slightly
different (Linnaeus is absent, and actually ranks $21$), which tells us that the
Kendall's $\tau$ is not giving completely unreasonable data. Nonetheless,
closeness continues to provide apparently random results.

We have actually to delve deep into Wikipedia, beyond rank 100\,000 using the
category ``cocktail'' to see that, finally, things settle down
(Table~\ref{tab:topcocktail}). While closeness still displays a few quirks, the
rankings start to stabilize.

To understand what happens in the very low-rank region, in
Table~\ref{tab:kendall2} we provide Kendall's $\tau$ as in
Table~\ref{tab:kendall}, but \emph{restricting the computation to nodes of
indegree 1 and 2}. As it is immediately evident, after stabilization the
low-rank region is fraught with noise and all correlation values drop
significantly.

The very high correlation between closeness and harmonic centrality is,
actually, not strange: on the nodes reachable from giant connected component of
our Wikipedia snapshot ($89$\% of the nodes) they agree almost exactly, as closeness is the 
reciprocal of a denormalized \emph{arithmetic} mean, whereas harmonic centrality
is the reciprocal of a denormalized \emph{harmonic} mean~\cite{BoVAC}. Even if
the remaining $11$\% of the nodes is completely out of place, making closeness useless, Kendall's $\tau$ tells us that it should be interchangeable
with harmonic centrality.
At the same time, Kendall's $\tau$ tells us that indegree is very different from
PageRank, which again goes completely against our empirical evidence.

In the rest of the paper, we will try to approach in a systematic manner these
problems by defining a new weighted correlation index for scores with ties.
 
\begin{table}[t]
\centering
\scriptsize
\begin{tabular}{lllll}
\multicolumn{1}{c}{Indegree} & \multicolumn{1}{c}{PageRank} & \multicolumn{1}{c}{Katz} & \multicolumn{1}{c}{Harmonic} & \multicolumn{1}{c}{Closeness} \\
\hline
United States & United States & United States & United States & \textbf{Kharqan Rural District} \\
List of sovereign states & Animal & List of sovereign states & United Kingdom & \textbf{Talageh-ye Sofla} \\
Animal & List of sovereign states & United Kingdom & World War II & \textbf{Talageh-ye Olya} \\
England & France & France & France & \textbf{Greatest Remix Hits (Whigfield album)} \\
France & Germany & Animal & Germany & \textbf{Suzhou HSR New Town} \\
Association football & Association football & World War II & Association football & \textbf{Suzhou Lakeside New City} \\
United Kingdom & England & England & English language & \textbf{Mepirodipine} \\
Germany & India & Association football & China & \textbf{List of MPs  \ldots  M--N} \\
Canada & United Kingdom & Germany & Canada & \textbf{List of MPs  \ldots  O--R} \\
World War II & Canada & Canada & India & \textbf{List of MPs  \ldots  S--T} \\
India & Arthropod & India & \textbf{Latin} & \textbf{List of MPs  \ldots  U--Z} \\
Australia & Insect & Australia & World War I & \textbf{List of MPs  \ldots  J--L} \\
London & World War II & London & England & \textbf{List of MPs  \ldots  C} \\
Japan & Japan & Italy & Italy & \textbf{List of MPs  \ldots  F--I} \\
Italy & Australia & Japan & \textbf{Russia} & \textbf{List of MPs  \ldots  A--B} \\
Arthropod & Village & New York City & \textbf{Europe} & \textbf{List of MPs  \ldots  D--E} \\
Insect & Italy & English language & Australia & \textbf{Esmaili-ye Sofla} \\
New York City & Poland & China & \textbf{European Union} & \textbf{Esmaili-ye Olya} \\
English language & English language & Poland & \textbf{Catholic Church} & \textbf{Levels of organization (ecology)} \\
Village & \textbf{National Reg. of Hist. Places} & World War I &
London & \textbf{Jacques Moeschal (architect)} \\
\end{tabular}
\caption{\label{tab:topwiki}Top $20$ pages of the English version of Wikipedia
following five different centrality measures.}
\end{table}

\begin{table}[t]
\centering
\scriptsize
\begin{tabular}{lllll}
\multicolumn{1}{c}{Indegree} & \multicolumn{1}{c}{PageRank} & \multicolumn{1}{c}{Katz} & \multicolumn{1}{c}{Harmonic} & \multicolumn{1}{c}{Closeness} \\
\hline
Carl Linnaeus & Carl Linnaeus & Carl Linnaeus & Aristotle & \textbf{No\"el
Bernard (botanist)} \\
Aristotle & Aristotle & Aristotle & Albert Einstein & \textbf{Charles Coquelin} \\
Thomas Jefferson & Thomas Jefferson & Thomas Jefferson & Thomas Jefferson & \textbf{Markku Kivinen} \\
Margaret Thatcher & Charles Darwin & Albert Einstein & Charles Darwin & \textbf{Angiolo Maria Colomboni} \\
Plato & Plato & Charles Darwin & Thomas Edison & \textbf{Om Prakash (historian)} \\
Charles Darwin & Albert Einstein & Karl Marx & \textbf{Alexander Graham Bell} & \textbf{Michel Mandjes} \\
Karl Marx & Karl Marx & Plato & \textbf{Nikola Tesla} & \textbf{Kees Posthumus} \\
Albert Einstein & Pliny the Elder & Margaret Thatcher & \textbf{William James} & \textbf{F. Wolfgang Schnell} \\
Vladimir Lenin & Vladimir Lenin & Vladimir Lenin & Isaac Newton & \textbf{Christof Ebert} \\
Sigmund Freud & Johann Wolfgang von Goethe & Isaac Newton & Karl Marx & \textbf{Reese Prosser} \\
J. R. R. Tolkien & Margaret Thatcher & Ptolemy & \textbf{Charles Sanders Peirce} & \textbf{David Tulloch} \\
Johann Wolfgang von Goethe & Ptolemy & Johann Wolfgang von Goethe & Noam Chomsky & \textbf{Kim Hawtrey} \\
\textbf{Spider-Man} & Sigmund Freud & Pliny the Elder & \textbf{Enrico Fermi} & \textbf{Patrick J. Miller} \\
Pliny the Elder & Isaac Newton & Benjamin Franklin & Ptolemy & \textbf{Mikel King} \\
Benjamin Franklin & Benjamin Franklin & J. R. R. Tolkien & \textbf{John Dewey} & \textbf{Albert Perry Brigham} \\
Leonardo da Vinci & J. R. R. Tolkien & Thomas Edison & Johann Wolfgang von Goethe & \textbf{Gordon Wagner (economist)} \\
Isaac Newton & Immanuel Kant & Sigmund Freud & \textbf{Bertrand Russell} & \textbf{George Henry Chase} \\
Ptolemy & Leonardo da Vinci & Immanuel Kant & Plato & \textbf{Charles C. Horn} \\
Immanuel Kant & \textbf{Pierre Andr\'e Latreille} & Leonardo da Vinci &
\textbf{John von Neumann} & \textbf{Paul Goldstene} \\
\textbf{George Bernard Shaw} & Thomas Edison & Noam Chomsky & Vladimir Lenin & \textbf{Robert Stanton Avery} \\
\end{tabular}
\caption{\label{tab:topscientist}Top $20$ pages of Wikipedia following five different centrality measures and restricting pages to Yago2 Wordnet category ``scientist''. The global rank of these items is beyond 300.}
\end{table}

\begin{table}[t]
\centering\small
\begin{tabular}{l|lllll}
&\multicolumn{1}{c}{Ind.} & \multicolumn{1}{c}{PR} & \multicolumn{1}{c}{Katz} & \multicolumn{1}{c}{Harm.} &  \multicolumn{1}{c}{Cl.} \\
\hline
Indegree & 1  & \cellcolor[gray]{0.70} 0.75 & \cellcolor[gray]{0.48} 0.90 & \cellcolor[gray]{0.83} 0.62 & \cellcolor[gray]{0.88} 0.55\\
PageRank & \cellcolor[gray]{0.70} 0.75 & 1  & \cellcolor[gray]{0.70} 0.75 & \cellcolor[gray]{0.84} 0.61 & \cellcolor[gray]{0.87} 0.56\\
Katz & \cellcolor[gray]{0.48} 0.90 & \cellcolor[gray]{0.70} 0.75 & 1  & \cellcolor[gray]{0.76} 0.70 & \cellcolor[gray]{0.84} 0.62\\
Harmonic & \cellcolor[gray]{0.83} 0.62 & \cellcolor[gray]{0.84} 0.61 & \cellcolor[gray]{0.76} 0.70 & 1  & \cellcolor[gray]{0.46} 0.92\\
Closeness & \cellcolor[gray]{0.88} 0.55 & \cellcolor[gray]{0.87} 0.56 & \cellcolor[gray]{0.84} 0.62 & \cellcolor[gray]{0.46} 0.92 & 1 \\
\end{tabular}
\caption{\label{tab:kendall}Kendall's $\tau$ between Wikipedia centrality measures.}
\end{table}

\begin{table}[t]
\centering\small
\begin{tabular}{l|lllll}
&\multicolumn{1}{c}{Ind.} & \multicolumn{1}{c}{PR} & \multicolumn{1}{c}{Katz} & \multicolumn{1}{c}{Harm.} &  \multicolumn{1}{c}{Cl.} \\
\hline
Indegree & 1  & \cellcolor[gray]{0.98} 0.31 & \cellcolor[gray]{0.83} 0.63 & \cellcolor[gray]{0.99} 0.24 & \cellcolor[gray]{1.00} 0.06\\
PageRank & \cellcolor[gray]{0.98} 0.31 & 1  & \cellcolor[gray]{0.99} 0.27 & \cellcolor[gray]{1.00} 0.10 & \cellcolor[gray]{1.00} 0.10\\
Katz & \cellcolor[gray]{0.83} 0.63 & \cellcolor[gray]{0.99} 0.27 & 1  & \cellcolor[gray]{0.91} 0.50 & \cellcolor[gray]{0.99} 0.20\\
Harmonic & \cellcolor[gray]{0.99} 0.24 & \cellcolor[gray]{1.00} 0.10 & \cellcolor[gray]{0.91} 0.50 & 1  & \cellcolor[gray]{0.80} 0.65\\
Closeness & \cellcolor[gray]{1.00} 0.06 & \cellcolor[gray]{1.00} 0.10 & \cellcolor[gray]{0.99} 0.20 & \cellcolor[gray]{0.80} 0.65 & 1 \\
\end{tabular}
\caption{\label{tab:kendall2}Kendall's $\tau$ between Wikipedia centrality
measures, restricted to nodes of indegree 1 and 2.}
\end{table}

\section{Definitions and Tools}
\label{sec:def}

In his 1945 paper about ranking with ties~\cite{KenTTRP}, Kendall, starting from
an observation of Daniels~\cite{DanRMCUSP}, reformulates
his correlation index using a definition similar in spirit to that of an
inner product, which will be the starting point of our proposal: we consider
two real-valued vectors $\br$ and $\bs$ (to be thought as
scores) with indices in $[n]$; then, let us define
\[
\la \br , \bs\ra := \sum_{i<j}\signum(r_i-r_j)\signum(s_i-s_j),
\]
where
\[
\signum(x):=\begin{cases}
1 & \text{if $x > 0$;}\\
0 & \text{if $x = 0$;}\\
-1 & \text{if $x < 0$}.
\end{cases}
\]
Indices of score vectors in summations belong to $[n]$ throughout the paper.
Note that the expression above is actually an inner product in a space of
dimension $n(n-1)$: each score vector $\bm r$ is mapped the vector with
coordinate $\langle i,j\rangle$, $i<j$, given by $\signum(r_i-r_j)$.
We have the property
\[
\la \bm  r , \alpha\bs\ra  = \la  \alpha\br ,
\bs\ra = \signum(\alpha) \la \br , \bs\ra,
\]
which reminds of the analogous property for inner products, and that
$\la\br,-\ra=\la-,\br\ra=0$ if $\br$ is constant.
Following the analogy, we can define
\[
\|\br\| := \sqrt{\la \br, \br\ra},
\]
so 
\[
\|\alpha\br\| = |\signum(\alpha)|\cdot\|\br\|. 
\]
The norm thus defined measures the ``untieness'' of $\br$: it is zero if and only
if $\br$ is a constant vector, and it has maximum value $\sqrt{n(n-1)/2}$ when
all components of $\br$ are distinct.

Since $\la \br, \bs\ra$ is an inner product on a larger space, we have a
Cauchy--Schwartz-like inequality:
\[|\la \br,\bs\ra|\leq \|\br\|\|\bs\|\]
This property makes it possible to define Kendall's $\tau$ between
two vectors $\br$ and $\bs$ with nonnull norm as a normalized inner product, in a way formally identical to
cosine similarity:
\begin{equation}
\label{eq:tau}
\tau(\br, \bs) :=\frac{\la \br, \bs\ra}{\|\br\|\cdot\|\bs\|}.
\end{equation}
We recall that if $\br$ and $\bs$ have no ties, the definition reduces to the
classical ``normalized difference of concordances and discordances'', as the
denominator is exactly $n(n-1)/2$. The definition above is exactly that proposed
by Kendall~\cite{KenTTRP}, albeit we use a different formalism.

The form of~(\ref{eq:tau}) suggests that to obtain a weighted correlation index
it would be natural to define a \emph{weighted} inner product
\[
\la \br , \bs\ra_w := \sum_{i<j}\signum(r_i-r_j)\signum(s_i-s_j)w(i,j),
\]
where $w(-,-):[n]\times [n]\to \mathbf R_{\geq0}$ is some nonnegative
symmetric weight function. We would have then a new norm
$\|\br\|_w=\sqrt{\la\br,\br\ra_w}$ and a new correlation index
\[
\tau_w(\br, \bs) :=\frac{\la \br, \bs\ra_w}{\|\br\|_w\cdot\|\bs\|_w}.
\]
Note that still $\la\br,-\ra_w=\la-,\br\ra_w=0$ if $\br$ is constant.

We say that two score vectors
$\br$ and $\bs$ are \emph{equivalent} if $\signum(r_i-r_j)=\signum(s_i-s_j)$,
\emph{opposite} if $\signum(r_i-r_j)=-\signum(s_i-s_j)$ for all $i$ and $j$.
Since $\la \br, \bs\ra$ is a \emph{semi-definite inner product} on a larger
space (due the possibility of zero weights), the Cauchy-Schwarz inequality
still holds, even if we need positive definiteness for the necessary condition
on equality:
\begin{theorem}
\label{th:cs} 
$|\la \br,\bs\ra_w|\leq \|\br\|_w\|\bs\|_w.$
A sufficient condition for equality to hold is
that the two vectors are equivalent or opposite. 
The condition is necessary if $w$ is strictly positive and $\|\br\|_w,
\|\bs\|_w\neq 0$ or $\|\br\|_w,
\|\bs\|_w=0$
\end{theorem}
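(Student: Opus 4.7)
The key observation is that $\la\cdot,\cdot\ra_w$ is literally a weighted Euclidean form on the ambient space indexed by unordered pairs. Define the embedding
\[
\phi:\R^n\to\R^{\binom{n}{2}},\qquad \phi(\br)_{ij}=\signum(r_i-r_j)\ \text{for } i<j,
\]
and the bilinear form $B(u,v)=\sum_{i<j}u_{ij}v_{ij}w(i,j)$ on $\R^{\binom{n}{2}}$. Since $w\geq 0$, $B$ is positive semi-definite, and when $w$ is strictly positive it is a genuine (positive definite) inner product. By construction $\la\br,\bs\ra_w=B(\phi(\br),\phi(\bs))$ and $\|\br\|_w^2=B(\phi(\br),\phi(\br))$, so the inequality $|\la\br,\bs\ra_w|\leq\|\br\|_w\|\bs\|_w$ is exactly the Cauchy--Schwarz inequality for the positive semi-definite form $B$, which holds in complete generality (the standard derivation via expanding $B(\phi(\br)\pm t\phi(\bs),\phi(\br)\pm t\phi(\bs))\geq 0$ and optimizing in $t$, or equivalently invoking the version of Cauchy--Schwarz valid for any PSD form).

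For the sufficient condition I simply translate the hypothesis through $\phi$. If $\br$ and $\bs$ are equivalent, then $\phi(\br)=\phi(\bs)$, so $\la\br,\bs\ra_w=\|\br\|_w^2=\|\br\|_w\|\bs\|_w$, with the last equality because the two norms coincide. If they are opposite, $\phi(\br)=-\phi(\bs)$ and the same computation gives $\la\br,\bs\ra_w=-\|\br\|_w\|\bs\|_w$, whence equality in absolute value.

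For the necessary condition, assume $w$ is strictly positive, so $B$ is a genuine inner product. If $\|\br\|_w=\|\bs\|_w=0$, then $\phi(\br)=\phi(\bs)=0$ by positive definiteness, meaning $\signum(r_i-r_j)=0=\signum(s_i-s_j)$ for all $i<j$, so the two vectors are both constant and are trivially equivalent (and opposite). If instead both norms are nonzero, standard equality in Cauchy--Schwarz for a true inner product forces $\phi(\br)=\lambda\phi(\bs)$ for some $\lambda\in\R$; both embedded vectors are nonzero, so $\lambda\neq 0$. The main point, and what I see as the only genuine obstacle, is pinning down $\lambda$: since every coordinate of $\phi(\br)$ and $\phi(\bs)$ lies in $\{-1,0,1\}$, picking any $i<j$ with $\phi(\bs)_{ij}\neq 0$ yields $\lambda=\phi(\br)_{ij}/\phi(\bs)_{ij}\in\{-1,0,1\}$, so $\lambda\in\{-1,+1\}$. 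The case $\lambda=+1$ gives $\phi(\br)=\phi(\bs)$ and equivalence; the case $\lambda=-1$ gives opposition. Extending the coordinatewise equalities from $i<j$ to all ordered pairs is immediate from the antisymmetry of $r\mapsto\signum(r_i-r_j)$ in $(i,j)$, completing the characterization.
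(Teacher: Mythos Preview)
Your proof is correct and follows precisely the approach the paper indicates: the paper does not give a detailed proof of this theorem but simply observes, in the sentence preceding the statement, that $\la\br,\bs\ra_w$ is a semi-definite inner product on a larger space, so the Cauchy--Schwarz inequality holds, with positive definiteness needed for the necessary condition on equality. You have carried out exactly this plan, with the embedding $\phi$ making the larger space explicit and the argument that $\lambda\in\{-1,+1\}$ (from the $\{-1,0,1\}$-valuedness of the coordinates) filling in the one detail the paper leaves to the reader.
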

Note that the two last conditions are necessary: when $w$ is the constant zero
weight we have equality for all vectors, and if one of the vector has null norm
while the other has not the necessary linearity condition for equality is
moot.

Interestingly, even if our ``inner product'' is neither additive nor linear, we
can still prove directly the triangular inequation for the induced ``norm'':
\begin{theorem}
$\|\br+\bs\|_w\leq \|\br\|_w+\|\bs\|_w$. 
\end{theorem}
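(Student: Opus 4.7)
The plan is to bypass the lack of bilinearity of $\langle\cdot,\cdot\rangle_w$ by computing $\|\br+\bs\|_w^2$ directly from the definition, using the fact that $\signum(x)^2$ is simply the indicator $[x\neq 0]$. This gives
\[
\|\br\|_w^2 \;=\; \sum_{i<j} w(i,j)\,[r_i\neq r_j],
\]
so the squared norm is nothing other than a weighted count of the pairs on which the score vector takes distinct values.

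The key combinatorial fact is then the pointwise indicator inequality
\[
[r_i+s_i\neq r_j+s_j] \;\leq\; [r_i\neq r_j] + [s_i\neq s_j],
\]
whose contrapositive ($r_i=r_j$ and $s_i=s_j$ imply $r_i+s_i=r_j+s_j$) is trivial. Multiplying by the nonnegative weight $w(i,j)$ and summing over $i<j$, I obtain the stronger bound
\[
\|\br+\bs\|_w^2 \;\leq\; \|\br\|_w^2 + \|\bs\|_w^2.
\]

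To finish, I just observe that for any nonnegative reals $a,b$ we have $(a+b)^2=a^2+2ab+b^2\geq a^2+b^2$, so
\[
\|\br+\bs\|_w^2 \;\leq\; \|\br\|_w^2 + \|\bs\|_w^2 \;\leq\; \bigl(\|\br\|_w+\|\bs\|_w\bigr)^2,
\]
and taking square roots (both sides are nonnegative) yields the triangle inequality.

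The only conceptual hurdle is recognizing that the usual Hilbert-space proof (expanding $\|\br+\bs\|_w^2 = \|\br\|_w^2 + 2\langle\br,\bs\rangle_w + \|\bs\|_w^2$ and invoking Cauchy--Schwarz) is unavailable here, because $\signum(r_i-r_j)+\signum(s_i-s_j)$ is \emph{not} in general equal to $\signum((r_i+s_i)-(r_j+s_j))$, so $\langle\cdot,\cdot\rangle_w$ fails to be linear in either argument. Fortunately, the indicator reformulation of $\|\cdot\|_w^2$ sidesteps this entirely, and in fact delivers a bound strictly stronger than the triangle inequality as a byproduct.
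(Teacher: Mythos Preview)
Your proof is correct. Both you and the paper start from the same observation that $\signum(x)^2=[x\neq 0]$ and reduce the problem to a pointwise inequality between indicators, but the two executions diverge at that point. The paper bounds
\[
\signum\bigl((r_i-r_j)+(s_i-s_j)\bigr)^2 \;\leq\; \bigl(|\signum(r_i-r_j)|+|\signum(s_i-s_j)|\bigr)^2,
\]
which after expanding the square produces a cross term $2\sum_{i<j}|\signum(r_i-r_j)\signum(s_i-s_j)|\,w(i,j)$ that must then be separately controlled by $2\|\br\|_w\|\bs\|_w$. You instead use the sharper pointwise bound $[a+b\neq 0]\leq [a\neq 0]+[b\neq 0]$, which kills the cross term outright and yields the strictly stronger conclusion $\|\br+\bs\|_w^2\leq \|\br\|_w^2+\|\bs\|_w^2$; the triangle inequality then follows from $a^2+b^2\leq (a+b)^2$. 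Your route is shorter, avoids the auxiliary estimate on the cross term, and delivers a quantitatively better inequality as a bonus.
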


\begin{proof}
\begin{align*}
&\|\br+\bs\|_w^2=\la\br + \bs,\br + \bs\ra_w\\
&=\sum_{i<j}\signum(r_i+s_i-r_j-s_j)^2w(i,j)\\
&\leq\sum_{i<j}(|\signum(r_i-r_j)|+|\signum(s_i-s_j)|)^2w(i,j)\\
&= \la\br ,\br\ra_w +\la\bs
,\bs\ra_w+\sum_{i<j}|\signum(r_i-r_j)\signum(s_i-s_j)|w(i,j).
\end{align*}
We now notice that
\[
\sum_{i<j}|\signum(r_i-r_j)\signum(s_i-s_j)|w(i,j)\leq
\sum_{i<j}\signum(r_i-r_j)^2w(i,j)=\la\br ,\br\ra_w=\|\br\|_w^2,
\]
and analogously for $\|\bs\|_w$. We conclude that
\[
\|\br+\bs\|_w^2\leq \|\br\|_w^2 +\|\bs\|_w^2 + 2
\|\br\|_w\|\bs\|_w =(\|\br\|_w +\|\bs\|_w )^2.\qed\]
\end{proof}
The triangular inequality has a nice combinatorial interpretation: adding score vectors can only
\emph{decrease} the amount of ``untieness''. There is no way to induce in a sum vector
more untieness than the amount present in the summands.

Finally, we gather systematically the properties of $\tau_w$:
\begin{theorem}
Let $w:[n]\times [n]\to\R$ be a nonnegative symmetric weight function. The
following properties hold for every score vector $\bm t$ and for every $\br$,
$\bs$ with nonnull norm:
\begin{itemize}
  \item if $\bm t$ is constant, $\|\bm t\|_w=0$;
  \item $-1\leq\tau_w(\br,\bs)\leq1$;
  \item if $\br$ and $\bs$ are equivalent, $\tau_w(\br,\bs)=1$;
  \item if $\br$ and $\bs$ are opposite, $\tau_w(\br,\bs)=-1$;
\end{itemize}

\begin{table*}[t]
\centering
\scriptsize
\begin{tabular}{lllll}
\multicolumn{1}{c}{Indegree} & \multicolumn{1}{c}{PageRank} & \multicolumn{1}{c}{Katz} & \multicolumn{1}{c}{Harmonic} & \multicolumn{1}{c}{Closeness} \\
\hline
Martini (cocktail) & Martini (cocktail) & Irish coffee & Irish coffee & \textbf{Magie Noir} \\
Pi\~na colada & Caipirinha & Caipirinha & Caipirinha & \textbf{Batini (drink)} \\
Mojito & Mojito & Martini (cocktail) & Kir (cocktail) & \textbf{Scorpion bowl} \\
Caipirinha & Pi\~na colada & Pi\~na colada & Martini (cocktail) & \textbf{Poinsettia (cocktail)} \\
Cuba Libre & Irish coffee & Kir (cocktail) & Pi\~na colada & Irish coffee \\
Irish coffee & Kir (cocktail) & Mojito & Mojito & Caipirinha \\
Singapore Sling & Cosmopolitan (cocktail) & Mai Tai & Beer cocktail & Kir (cocktail) \\
Manhattan (cocktail) & Manhattan (cocktail) & Cuba Libre & Shaken, not stirred & Martini (cocktail) \\
Windle (sidecar) & IBA Official Cocktail & Singapore Sling & Pisco Sour & Pi\~na colada \\
Cosmopolitan (cocktail) & Beer cocktail & Long Island Iced Tea & Mai Tai & Mojito \\
Mai Tai & Mai Tai & Shaken, not stirred & Spritz (alcoholic beverage) & Beer cocktail \\
IBA Official Cocktail & Singapore Sling & Beer cocktail & Long Island Iced Tea & Shaken, not stirred \\
Kir (cocktail) & Cuba Libre & Manhattan (cocktail) & Sazerac & Mai Tai \\
Shaken, not stirred & \textbf{Tom Collins} & Cosmopolitan (cocktail) & Fizz (cocktail) & Spritz (alcoholic beverage) \\
Beer cocktail & Long Island Iced Tea & Windle (sidecar) & Flaming beverage & Pisco Sour \\
Pisco Sour & Sour (cocktail) & Pisco Sour & Cuba Libre & Long Island Iced Tea \\
Long Island Iced Tea & Shaken, not stirred & White Russian (cocktail) & Wine cocktail & Sazerac \\
Sour (cocktail) & \textbf{Negroni} & IBA Official Cocktail & Singapore Sling & Flaming beverage \\
White Russian (cocktail) & Flaming beverage & Moscow mule & Moscow mule & Fizz (cocktail) \\
Vesper (cocktail) & \textbf{Lillet} & Vesper (cocktail) & White Russian (cocktail) & Wine cocktail \\
\end{tabular}
\caption{\label{tab:topcocktail}Top $20$ pages of Wikipedia following five different centrality measures and restricting pages to Yago2 Wordnet category ``cocktail''. The global rank of these items is beyond 100\,000.}
\end{table*}

Moreover, if $w$ is strictly positive:
\begin{itemize}
  \item if $\|\bm t\|_w=0$, $\bm t$ is constant;
  \item if $\tau_w(\br,\bs)=1$, $\br$ and $\bs$ are equivalent;
  \item if $\tau_w(\br,\bs)=-1$, $\br$ and $\bs$ are opposite.
\end{itemize}
\end{theorem}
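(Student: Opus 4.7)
The plan is to derive every item directly from the definitions together with the Cauchy--Schwarz-like inequality of Theorem~\ref{th:cs}. The first bullet is immediate: if $\bm t$ is constant then $\signum(t_i-t_j)=0$ for every pair $i<j$, so $\|\bm t\|_w^2=\sum_{i<j}\signum(t_i-t_j)^2 w(i,j)=0$. The bounds $-1\leq\tau_w(\br,\bs)\leq 1$ come straight from Theorem~\ref{th:cs} after dividing by $\|\br\|_w\|\bs\|_w$, which is legal because both norms are assumed nonnull.

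For the equivalence and opposition cases I would exploit that $\signum(\cdot)^2\in\{0,1\}$, so $\signum(r_i-r_j)=\pm\signum(s_i-s_j)$ at every pair implies $\signum(r_i-r_j)^2=\signum(s_i-s_j)^2$ at every pair; hence $\|\br\|_w=\|\bs\|_w$ in both cases. In the equivalent case we then get $\la\br,\bs\ra_w=\sum_{i<j}\signum(r_i-r_j)^2 w(i,j)=\|\br\|_w^2=\|\br\|_w\|\bs\|_w$, yielding $\tau_w(\br,\bs)=1$; in the opposite case the same computation gives $\la\br,\bs\ra_w=-\|\br\|_w^2$ and hence $\tau_w(\br,\bs)=-1$.

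For the ``moreover'' half I would invoke strict positivity of $w$. From $\|\bm t\|_w^2=\sum_{i<j}\signum(t_i-t_j)^2 w(i,j)=0$ and $w>0$, every term must vanish, so $t_i=t_j$ for every $i<j$, i.e., $\bm t$ is constant. If $\tau_w(\br,\bs)=\pm 1$ then equality holds in the Cauchy--Schwarz inequality of Theorem~\ref{th:cs}; since $w$ is strictly positive and $\|\br\|_w,\|\bs\|_w\neq 0$ we are precisely in the regime in which that theorem guarantees equivalence or opposition. The sign of $\la\br,\bs\ra_w=\pm\|\br\|_w\|\bs\|_w\neq 0$ then selects the appropriate alternative.

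The only mildly delicate point I anticipate is the last one: once Theorem~\ref{th:cs} yields ``equivalent or opposite,'' I must confirm that the wrong alternative is incompatible with the observed sign of $\tau_w$. But by the equivalence/opposition computation of the previous paragraph, equivalent vectors yield $\tau_w=+1$ and opposite vectors yield $\tau_w=-1$, so the value $\pm 1$ immediately pins down which case we are in. Everything else reduces to tracking the identity $\signum(\cdot)^2\in\{0,1\}$ and to the nonnullness hypotheses that make the quotient defining $\tau_w$ meaningful.
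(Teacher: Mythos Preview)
Your proposal is correct and mirrors exactly what the paper intends: the paper does not spell out a proof for this theorem, treating it as an immediate consequence of the definitions and of Theorem~\ref{th:cs}, which is precisely how you argue each item. The only cosmetic difference is that for the ``equivalent $\Rightarrow\tau_w=1$'' and ``opposite $\Rightarrow\tau_w=-1$'' parts you compute directly, whereas one could equally invoke the sufficiency clause of Theorem~\ref{th:cs}; both routes are one line.
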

As a result, if $w$ is strictly positive and we obtain correlation $\pm1$
the equivalence classes formed by tied scores are necessarily in a
size-preserving bijection that is monotone decreasing on the scores.

\subsection{Decoupling rank and weight} The reader has
probably already noticed that the dependence on the weight on the \emph{indices} associated to the elements has no meaning: a trivial request
(see, for instance~\cite{KemMWN}) on a correlation measure is that, like
Kendall's $\tau$, it is \emph{invariant by isomorphism}, that is, it does not change if we permute the
indices of the vector. This currently doesn't happen because we are using
the numbering of the element as \emph{ground truth} to weigh the
correlation between $\br$ and $\bs$. While there is nothing bad in principle
(we can stipulate that elements are indexed in order of importance using some external
source of information), we think that a more flexible approach decouples the
problem of the ground truth from the problem of weighting. We thus define the
\emph{ranked-weight} product
\[
\la \br , \bs\ra_{\rho,w} := \sum_{i<j}\signum(r_i-r_j)\signum(s_i-s_j)w(\rho(i),\rho(j)),
\]
where $\rho: [n]\to [n]\cup \{\,\infty\,\}$ is a ranking function
associating with each index a \emph{rank}, 
the highest rank being zero. We admit the possibility of rank $\infty$,
given that the weight function provides a meaningful value in such a case, to
include also the case of \emph{partial ground truths}.
The definition of the ranked-weighted product induces, as in~(\ref{eq:tau}), a correlation index
$\tau_{\rho, w}$, and the machinery we developed applies immediately, as
$w(\rho(-),\rho(-))$ is just a different weight function.

What if there is no ground truth to rely on? Our best bet is to use the
rankings induced by the vectors $\br$ and $\bs$. Let us denote by $\rho_{\br,
\bs}$ the ranking defined by ordering elements lexicographically with respect to
$\br$ and then $\bs$ in case of a tie (in descending order), and analogously for
$\rho_{\bs, \br}$ (if two elements are at a tie in both vectors, their can be placed in any order, 
as their rank does not influence the value of $\tau_{\rho,w}$).
We define
\begin{equation}
\label{eq:symm}
\tau_{w,\bullet}(\br,\bs) := \frac{\tau_{\rho_{\br, \bs}, w}(\br,\bs) +
\tau_{\rho_{\bs, \br}, w}(\br,\bs)}2.
\end{equation}
The same approach has been used in~\cite{YARNRCCIR} to make AP correlation symmetric.
This is the definition used in the rest of the paper.

\subsection{Choosing a weighting scheme}
\label{sec:choosing}

There are of course many ways to choose $w$. For computational reasons, we will
see that it is a good idea to restrict to a class of weighting schemes in which
$w$ is obtained by combining additively or multiplicatively 
a one-argument weighting function $f:[n]\to \R_{\geq0}$ applied to each element
of a pair.

Shieh~\cite{ShiWKTS}, for instance, combines weights multiplicatively, without
giving a motivation. We have, however, two important motivations for
\emph{adding} weights. First and foremost, unless weights are scaled in some way
that depends on $n$ (which we would like to avoid), the largest weight will be
some constant, and then weight will decrease monotonically with importance.
As a result, an exchange between the first and the last element would be assigned an
extremely low weight. Second, adding weights paves the way to a natural
measure for \emph{top $k$ correlation}~\cite{FKSCTKL} by assigning rank $\infty$
to elements after the first $k$. The definition of such a measure in the
multiplicative case is quite contrived and ends up being case-by-case.

For what matters $f$, we are particularly interested in the \emph{hyperbolic}
weight function.
\[
f(r) := \frac1{r+1}.
\]
This function gives more importance to elements of high rank, and weights zero
only pairs in which both index have infinite rank.
Using a hyperbolic weight has a number of useful features. First, it reminds the
well-motivated weight given to exchanges by AP correlation. Second, it
guarantees that as $n$ grows the mass of weight grows indefinitely. Using a
function with quadratic decay, for instance, might end up in making the
influence of low-rank element vanish too quickly, as it is summable. For the opposite
reason, a \emph{logarithmic} decay might fail to be enough discriminative to
provide additional information with respect to the standard $\tau$.

\begin{figure}
\centering
\includegraphics[scale=.3]{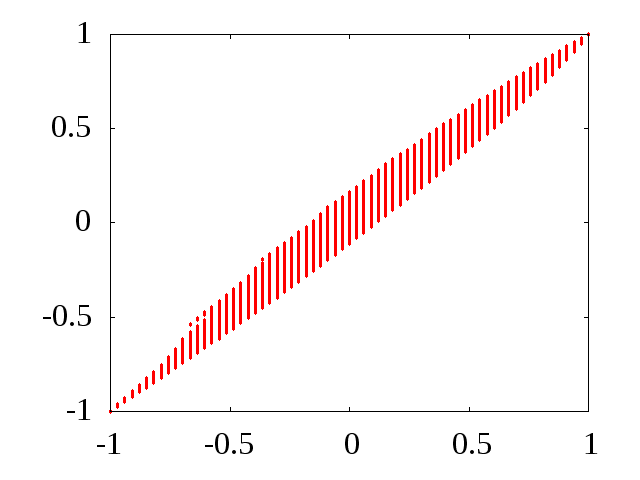}\includegraphics[scale=.3]{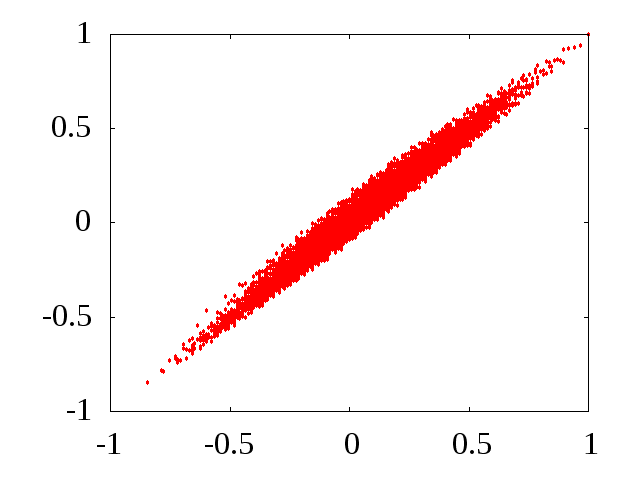}\\
\includegraphics[scale=.3]{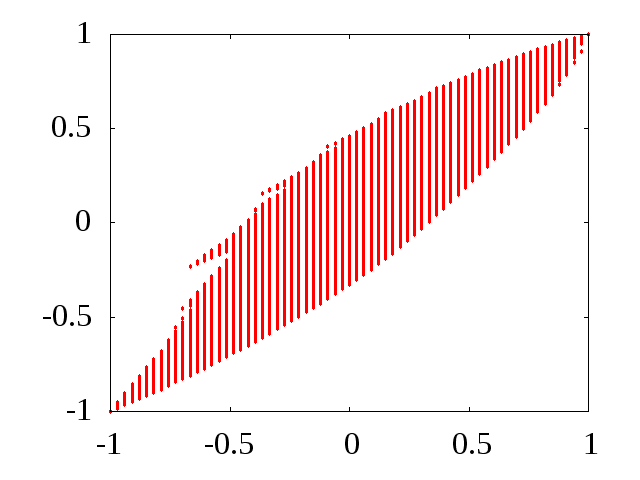}\includegraphics[scale=.3]{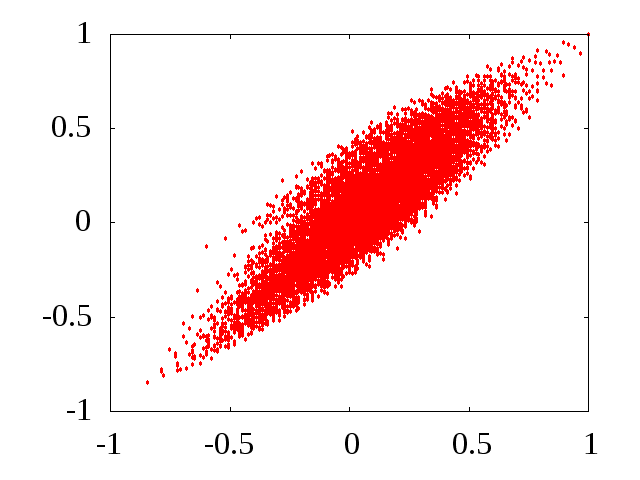}\\
\includegraphics[scale=.3]{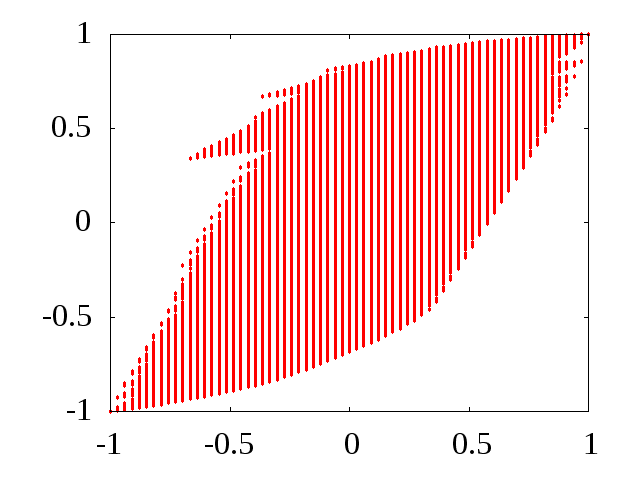}\includegraphics[scale=.3]{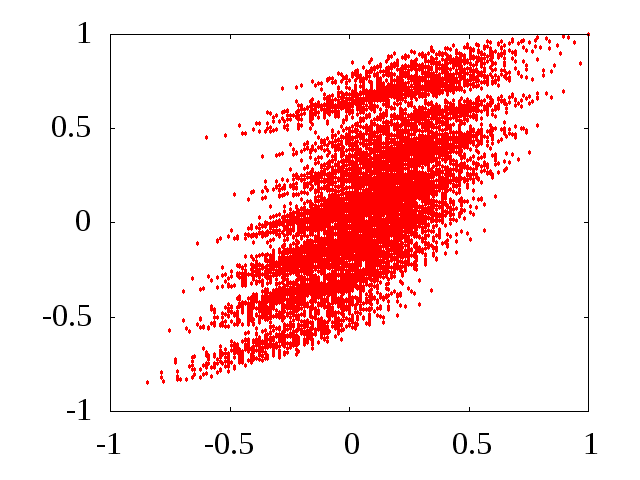}\\
\caption{\label{fig:corr}Scatter plots between Kendall's $\tau$ and the
additive weighted $\tau$.
The rows, from top to bottom, represent logarithmic, hyperbolic and quadratic
weighting.
The plots are generated correlating a permutation of 12 elements
versus the identity permutation (left), 
or a permuted set of scores with skewed distribution w.r.t.~the same scores in descending order (right).}
\end{figure}

We try to make this intuition more concrete in Figure~\ref{fig:corr}, where we
display a number of scatter plots showing the correlation between Kendall's $\tau$ and
the additive weighted $\tau$ defined by (\ref{eq:symm}) under different
weighting schemes.
The left half of the plots correlates all permutations on 12 elements with the
identity permutation. The right half correlates all score vectors made of 15
 values with skewed distribution (there are $t + 1$ elements with score
$0\leq t \leq 4$) with the same vector in descending order. A visual
examination of the plots suggests, indeed, that logarithmic weighting
restricts too much the possible divergence from Kendall's $\tau$, whereas
quadratic weighting ends up in providing answers that are too uncorrelated.
We will return to these consideration in Section~\ref{sec:experiments}.

\section{Computing {\large$\tau_{\rho,\ww}$}}
\label{sec:computing}

Our motivations come from the study of web and social graphs. It is thus essential that
our new correlation measure can be evaluated efficiently. We now
describe a generalization of Knight's algorithm~\cite{KnCMCKTUD} that makes it is possible
to compute $\tau_{\rho,w}$ in time $O(n \log n)$ under some assumptions on $w$.
Our first observation is that, similarly to the unweighted case, each pair of
indices $i,j$ with $i<j$ belongs to one of five subsets; it can be
\begin{itemize}
  \item a \emph{joint tie}, if $r_i=r_j$ and $s_i=s_j$;
  \item a \emph{left tie}, if $r_i=r_j$ and $s_i\neq s_j$;
  \item a \emph{right tie}, if $r_i\neq r_j$ and $s_i = s_j$;
  \item a \emph{concordance}, if $\signum(r_i -r_j)\signum(s_i -s_j)=1$;
  \item a \emph{discordance}, if $\signum(r_i -r_j)\signum(s_i -s_j)=-1$.
\end{itemize}

Let $J$, $L$, $R$, $C$ and $D$ be the overall weight of joint ties, left ties, right ties, concordances and 
discordances, respectively. Clearly,
\[
J+L+R+C+D = \sum_{i<j}w(\rho(i),\rho(j)) =T.
\]
The first requirement for our technique to work is that $T$ can be computed
easily.
This is possible if weights are computed additively or multiplicatively 
from some single-argument function $f$. In the 
additive case,
\begin{equation}
\label{eq:totsum}
T=\sum_{i<j}\bigl(f(\rho(i))+f(\rho(j))\bigr) =
(n-1)\sum_i f(\rho(i)).
\end{equation}
Also the multiplicative case is easy, as
\begin{equation}
\label{eq:totprod}
2T=2\sum_{i< j}f(\rho(i))f(\rho(j)) 
= \biggl(\sum_i f(\rho(i))\biggr)^2-\sum_i f(\rho(i))^2.
\end{equation}
The same observation leads to a simple $O(n\log n)$ algorithm to compute $L$: sort 
the indices in $[n]$ by $\br$, and for
each block of consecutive $k>1$ elements with the same score
apply~(\ref{eq:totsum}) or~(\ref{eq:totprod}) restricting the indices to the
subset. In the same way one can compute $R$ and $J$.

We now observe that, as in the unweighted case,
\[
\la \br , \bs\ra_{\rho, w}  = C - D = T - ( L + R - J ) - 2D.
\]
This can be easily seen from the fact that $C$ is given by the total weight $T$, minus the weight of discordances $D$,
minus the number of ties, joint or not, which is $L+R-J$ (we must avoid to count twice the weight of joint ties, hence the $-J$ term).
In particular,
\[
\la \br , \br\ra_{\rho, w}  = T - L \qquad\la \bs , \bs\ra_{\rho, w}   = T - R,
\]
as in this case there are just concordances and all ties are joint.

We are left with the computation of $D$. The core of Knight's algorithm is an \emph{exchange counter}: an $O(n\log n)$
algorithm that given a list of elements and an order $\preceq$ on the elements
of the list computes the number of exchanges that are necessary to
$\preceq$-sort the list. The algorithm is a modified MergeSort~\cite{KnuACPSS}\footnote{In
principle, any stable algorithm that sorts by comparison could be used. This is particularly interesting as
entirely on-disk algorithms, such as \emph{polyphase merge}~\cite{KnuACPSS},
could be used to count exchanges using constant core memory.}: during
the merging phase, whenever an element is moved from the second list to the
temporary result list the current number of elements of the first list is added to the number
of exchanges. The number of discordances is then equal to the number
of exchanges (as we evaluate whether there is a discordance on $i$ and $j$ only
for $i<j$).

Our goal is to make this computation weighted: for this to happen,
it must be possible to keep track incrementally of a \emph{residual weight} $r$
associated with the first list, and obtain in constant time the weight of the
exchanges generated by the movement of an element from the second list.

If weights are computed multiplicatively or additively starting from a
single-argument function $f$ this is not difficult: it is sufficient to let $r$ be the sum of the
values of $f$ applied to the elements currently in the first list. In the
additive case, moving an element $i$ from the second list 
increases the weight of exchanges by the residual $r$
plus the weight $f(\rho(i))$ multiplied by the length of the first list.
In the multiplicative case, we must instead use the weight $f(\rho(i))$ multiplied by the
residual $r$.  When we
move an element from the first list we update the residual by subtracting its weight.

The resulting recursive procedure (for the additive case) is
Algorithm~\ref{algo:wkt}. The final layout of the computation of
$\tau_{\rho, w}$ is thus as follows:
\begin{itemize}
  \item Consider a list $\mathscr L$ initially filled with the integers $[0\..n)$. 
  \item Sort stably $\mathscr L$ using $\br$ as primary key and $\bs$ as secondary key.  
  \item Compute $T$ and $L$ using $\mathscr L$ to enumerate elements in the
  order defined by $\br$ and $\bs$.
  \item Apply Algorithm~\ref{algo:wkt} to $\mathscr L$ using $\bs$ to define the
  order $\preceq$, thus computing $D$ and sorting $\mathscr L$ by $\bs$.
  \item Compute $R$ using $\mathscr L$ to enumerate elements in the
  order defined by $\bs$.
  \item Compute $T$ and put everything together.
\end{itemize}

\begin{Algorithm}[htb]
\smallskip\textbf{Input:} A list $\mathscr L$, a comparison function $\preceq$ for the elements of $\mathscr L$, 
a rank function $\rho$, and a single-argument weight function $f$ that will be combined additively. $e$ is a global variable
initialized to $0$ that will contain the weight of exchanges after the call weigh($0$, $|\mathscr L|$).
The procedure works on a sublist specified by its starting index $0\leq s<|\mathscr L|$
and its length $\ell$. $\mathscr T$ is a temporary list.\\
\textbf{Output:} the sum of $f(\rho(-))$ on the specified sublist.\\ 
\vspace{-1.5em}
\begin{tabbing}
\setcounter{prgline}{0}
\hspace{0.5cm} \= \hspace{0.3cm} \= \hspace{0.3cm} \= \hspace{0.3cm} \= \hspace{0.3cm} \= \kill
\pl\>\FUNCTION weigh($s$ : integer, $\ell$ : integer)\\
\pl\>\>\label{pr:g1}  \IF $\ell = 1$ \THEN \RETURN $f(\rho(\mathscr L[s]))$
\FI\\
\pl\>\>  $\ell_0 \leftarrow \lfloor\ell/2\rfloor$		 \\
\pl\>\>  $\ell_1 \leftarrow \ell-\ell_0$ \\
\pl\>\>  $m \leftarrow s + \ell_0$ \\
\pl\>\>  $r \leftarrow \operatorname{weigh}(s,\ell_0)$\\	
\pl\>\>  $w \leftarrow \operatorname{weigh}(m,\ell_1) + r$\\	
\pl\>\>  $i, j, k\leftarrow 0$			\\
\pl\>\>  \WHILE $j < \ell_0$ \LAND $k < \ell_1$ \DO\\
\pl\>\> 	\> \IF $\mathscr L[s+j] \preceq \mathscr L[m+k]$ \THEN\\
\pl\>\>  \> \>$\mathscr T[i] = \mathscr L[ s + j\pp ]$\\
\pl\>\>\label{pr:g2}   \> \>$r \leftarrow r - f(\rho(\mathscr T[i]))$\\
\pl\>\>  \> \ELSE\\
\pl\>\>  \> \>$\mathscr T[i] = \mathscr L[ m + k\pp ]$\\
\pl\>\>  \> \>$e \leftarrow e + f(\rho(\mathscr T[i]))\cdot (\ell_0 - j ) + r$\\
\pl\>\>  \> \FI\\
\pl\>\>  \> $i\pp$\\
\pl\>\>  \OD\\
\pl\>\>  \FOR $k = \ell_0 - j - 1,\ldots, 0$ \DO\\
\pl\>\>\>  $\mathscr L[ s + i + k ]\leftarrow \mathscr L[ s + j + k ]$\\ 
\pl\>\>  \OD \\
\pl\>\>  \FOR $k = 0, \ldots, i- 1$ \DO $\mathscr L[ s + k ] \leftarrow
\mathscr T[k]$ \OD \\
\pl\>\>  \RETURN $w$\\
\pl\>\END
\end{tabbing}
\caption{\label{algo:wkt}A generalization of Knight's algorithm for weighing
exchanges.}
\end{Algorithm}

\begin{Algorithm}[htb]
\begin{tabbing}
\setcounter{prgline}{9}
\hspace{0.5cm} \= \hspace{0.3cm} \= \hspace{0.3cm} \= \hspace{0.3cm} \= \hspace{0.3cm} \= \kill
\pl\>  \IF $\mathscr L[s+j] \preceq \mathscr L[m+k]$ \THEN\\
\pl\>  \>$\mathscr T[i] = \mathscr L[ s + j\pp ]$\\
\pl\>  \ELSE\\
\pl\>  \>$\mathscr T[i] = \mathscr L[ m + k\pp ]$\\
\pl\>  \>$e \leftarrow e + (\ell_0 - j)/\rho(\mathscr T[i])$\\
\pl\>  \FI
\end{tabbing}
\caption{\label{algo:ap}The replacement for lines 9--15 of
Algorithm~\ref{algo:wkt} to compute AP correlation.}
\end{Algorithm}

The running time of the computation is dominated by the sorting phases, and
it is thus $O(n\log n)$.

\subsection{The asymmetric case and AP Correlation}

It is easy to adapt Algorithm~\ref{algo:wkt} for the case in which $w(i,j)$ is
given by a combination of \emph{two} different
one-argument functions, one, $f$, for the left index and one, $g$, for the right
index. The only modification of Algorithm~\ref{algo:wkt} is the replacement
of $f$ with $g$ at line~14, so that we combine the residual computed with
$f$ with a weight computed with $g$.

The formulae for computing $T$ can be updated easily for the additive case: 
\[
T=\sum_{i<j}\bigl(f(\rho(i))+g(\rho(j))\bigr) = \sum_{i\neq0} i(f(\rho(n-1-i))
+ g(\rho(i)))
\]
and for the multiplicative case:
\[
T=\sum_{i<j}f(\rho(i))g(\rho(j)) = \sum_i f(\rho(i))
\sum_{i<j}g(\rho(j)).
\]
Both formulae can be computed in linear time using a suitable loop.

Given this setup, it is easy to compute AP correlation:
as it can be checked from the very definition~\cite{YARNRCCIR}, the AP
correlation of $\br$ w.r.t.~$\bs$, where both vectors have no ties, is simply
$\tau_{w,\rho_{\bs}}(\br,\bs)$, where $\rho_{\bs}$ is the ranking
induced by $\bs$ and the weight function $w$ is computed additively from
two weight functions $f(r)=0$, $g(r)=1/r$. In this
case, $T=n-1$, $J=L=R=0$ (we are under the assumption that there are no ties)
and Algorithm~\ref{algo:wkt} can be considerably simplified, as the residual $r$
is always zero.\footnote{Of course, it is possible to forget that we are
computing AP correlation and use the weight matrix just described combined with
the machinery of Section~\ref{sec:def} to define an ``AP correlation with
ties''. In this case, $J$, $L$ and $R$ should be computed using the formulae
for the asymmetric case, and the probabilistic interpretation would be lost.
Such an index would probably give a notion of correlation very similar to $\tauh$, but we find more natural and more in line
with Kendall's original definition to introduce the weighted $\tau$ as a
symmetric index in which both ends of an exchange are relevant in computing the
exchange weight.}

Algorithm~\ref{algo:ap} makes explicit the change to the selection statement of
Algorithm~\ref{algo:wkt} that is sufficient to compute AP correlation. Since
keeping track of the residual is no longer necessary, the recursive function
can be further simplified to a recursive procedure that does not return a
value.
The value $e$ computed by the modified algorithm is all we need to compute AP correlation using the formula
$(T-2e)/T$.

\begin{table}[t]
\centering\small
\begin{tabular}{l|lllll}
&\multicolumn{1}{c}{Ind.} & \multicolumn{1}{c}{PR} & \multicolumn{1}{c}{Katz} & \multicolumn{1}{c}{Harm.} &  \multicolumn{1}{c}{Cl.} \\
\hline
Indegree & 1  & \cellcolor[gray]{0.40} 0.95 & \cellcolor[gray]{0.35} 0.98 & \cellcolor[gray]{0.48} 0.90 & \cellcolor[gray]{0.99} 0.27\\
PageRank & \cellcolor[gray]{0.40} 0.95 & 1  & \cellcolor[gray]{0.39} 0.96 & \cellcolor[gray]{0.46} 0.92 & \cellcolor[gray]{0.81} 0.65\\
Katz & \cellcolor[gray]{0.35} 0.98 & \cellcolor[gray]{0.39} 0.96 & 1  & \cellcolor[gray]{0.43} 0.93 & \cellcolor[gray]{0.99} 0.26\\
Harmonic & \cellcolor[gray]{0.48} 0.90 & \cellcolor[gray]{0.46} 0.92 & \cellcolor[gray]{0.43} 0.93 & 1  & \cellcolor[gray]{0.98} 0.28\\
Closeness & \cellcolor[gray]{0.99} 0.27 & \cellcolor[gray]{0.81} 0.65 & \cellcolor[gray]{0.99} 0.26 & \cellcolor[gray]{0.98} 0.28 & 1 \\
\end{tabular}
\caption{\label{tab:tauh}$\tauh$ on Wikipedia.}
\end{table}

\section{Experiments}
\label{sec:experiments}

We now return to our main motivation---understanding the correlation between
centralities on large graph. In this section, we gather the results of a number of computational
experiment that help to corroborate our intuition that $\tauh$, the
\emph{additive hyperbolic weighted} $\tau$, works as expected. We will find also
an interesting surprise along the way.

Note that judging whether a new measure is useful for such a purpose is a
difficult task: to be interesting, a new measure must highlight features that
were previously undetectable or badly evaluated, but those are exactly those
features on which a systematic assessment is problematic.

Table~\ref{tab:tauh} reports the value of $\tauh$
on the Wikipedia graph. We finally see data corresponding to the empirical
evidence discussed in Section~\ref{sec:motivation}: indegree, Katz and PageRank
are almost identical, harmonic centrality is highly correlated but definitely
less than the previous triple, which matches our empirical observations. Closeness is not close to any ranking (and in
particular, not to harmonic centrality) due to its pathological
behavior.

There is of course a value that immediately stands out: the suspiciously
high correlation ($0.65$) between closeness and PageRank. We reserve discussing this
value for later.

In Table~\ref{tab:taulq} we show the same data for 
logarithmic and quadratic weights. The intuition we gathered from
Figure~\ref{fig:corr} is fully confirmed: logarithmic weights provides
results almost indistinguishable from Kendall's $\tau$
 (compare with Table~\ref{tab:kendall}), and quadratic weighs
make the influence of the tail so low that all non-pathological scores collapse.
 
To gather a better understanding of the behavior of $\tauh$ we extended our
experiments to two very different datasets:
the \emph{Hollywood co-starship graph}, an undirected graph ($2$ million nodes,
$229$ million edges) with an edge between two persons appearing in the
Internet Movie Data Base if they ever worked together, and a \emph{host
graph} ($100$ million nodes, $2$ billion arcs) obtained from a large-scale crawl
gathered by the Common Crawl Foundation\footnote{\url{http://commoncrawl.org/}} in the first half of
2012.\footnote{The crawl contains $3.53$ billion web documents; we are
using the associated host graph, which has a node for each host and an arc
between two hosts $x$ and $y$ if some page in $x$ points to some page in
$y$. More information about the graph can be found in \cite{MVLGSWR}, and the
complete host ranking can be accessed at
\url{http://wwwranking.webdatacommons.org/}.} As (unavoidably anecdotal)
empirical evidence we report the top 20 nodes for both graphs.

Table~\ref{tab:tauhollywood} should be compared with Table~\ref{tab:tophollywood}.
PageRank and harmonic centrality turns to be less correlated to indegree than
Katz in Table~\ref{tab:tauhollywood}, and indeed we find many quirk choices in the very
top PageRank actors (Ron Jeremy is a famous porn star; Lloyd Kaufman is an
independent horror/splatter filmmaker and Debbie Rochon an actress working with him).
Harmonic centrality provides unique names such as Malcolm McDowell, Robert De
Niro, Anthony Hopkins and Sylvester Stallone, and drops all USA presidents altogether.
Kendall's $\tau$ values, instead, suggest that PageRank and harmonic
centrality are entirely uncorrelated (whereas we find several common items),
and that harmonic and closeness centrality should be extremely similar.

We see analogous results comparing Table~\ref{tab:tauhcc} with
Table~\ref{tab:topcc}. Here $\tauh$ separates in a very strong way harmonic
centrality from the first three, and indeed we see a significant
difference in the lists, with numerous sites that have a high indegree and appear in at
least two of the three lists because of technical or political reasons
(\texttt{\small gmpg.org}, \texttt{\small rtalabel.org}, \texttt{\small staff.tumblr.com}, \texttt{\small miibeian.gov.cn}, \texttt{\small phpbb.com}) 
disappearing altogether in favor of sites such as \texttt{\small apple.com},
\texttt{\small amazon.com}, \texttt{\small myspace.com}, \texttt{\small microsoft.com},
\texttt{\small bbc.co.uk}, \texttt{\small nytimes.com} and \texttt{\small guardian.co.uk}, which
do not appear in any other list. If we look at Kendall's $\tau$, we should
expect PageRank and Katz to give very different rankings, whereas more than half
of their top 20 elements are in common.

\begin{table}[t]
\centering\small
\begin{tabular}{l|lllll}
&\multicolumn{1}{c}{Ind.} & \multicolumn{1}{c}{PR} & \multicolumn{1}{c}{Katz} & \multicolumn{1}{c}{Harm.} &  \multicolumn{1}{c}{Cl.} \\
\hline
Indegree & 1  & \cellcolor[gray]{0.70} 0.76 & \cellcolor[gray]{0.48} 0.90 & \cellcolor[gray]{0.83} 0.63 & \cellcolor[gray]{0.88} 0.55\\
PageRank & \cellcolor[gray]{0.70} 0.76 & 1  & \cellcolor[gray]{0.69} 0.76 & \cellcolor[gray]{0.84} 0.62 & \cellcolor[gray]{0.87} 0.56\\
Katz & \cellcolor[gray]{0.48} 0.90 & \cellcolor[gray]{0.69} 0.76 & 1  & \cellcolor[gray]{0.76} 0.70 & \cellcolor[gray]{0.84} 0.62\\
Harmonic & \cellcolor[gray]{0.83} 0.63 & \cellcolor[gray]{0.84} 0.62 & \cellcolor[gray]{0.76} 0.70 & 1  & \cellcolor[gray]{0.47} 0.91\\
Closeness & \cellcolor[gray]{0.88} 0.55 & \cellcolor[gray]{0.87} 0.56 & \cellcolor[gray]{0.84} 0.62 & \cellcolor[gray]{0.47} 0.91 & 1 \\
\end{tabular}\\[2ex]
\begin{tabular}{l|lllll}
&\multicolumn{1}{c}{Ind.} & \multicolumn{1}{c}{PR} & \multicolumn{1}{c}{Katz} & \multicolumn{1}{c}{Harm.} &  \multicolumn{1}{c}{Cl.} \\
\hline
Indegree & 1  & \cellcolor[gray]{0.30} 1.00 & \cellcolor[gray]{0.30} 1.00 & \cellcolor[gray]{0.30} 1.00 & \cellcolor[gray]{0.99} 0.22\\
PageRank & \cellcolor[gray]{0.30} 1.00 & 1  & \cellcolor[gray]{0.30} 1.00 & \cellcolor[gray]{0.30} 1.00 & \cellcolor[gray]{0.57} 0.85\\
Katz & \cellcolor[gray]{0.30} 1.00 & \cellcolor[gray]{0.30} 1.00 & 1  & \cellcolor[gray]{0.30} 1.00 & \cellcolor[gray]{1.00} 0.18\\
Harmonic & \cellcolor[gray]{0.30} 1.00 & \cellcolor[gray]{0.30} 1.00 & \cellcolor[gray]{0.30} 1.00 & 1  & \cellcolor[gray]{1.00} 0.07\\
Closeness & \cellcolor[gray]{0.99} 0.22 & \cellcolor[gray]{0.57} 0.85 & \cellcolor[gray]{1.00} 0.18 & \cellcolor[gray]{1.00} 0.07 & 1 \\
\end{tabular}
\caption{\label{tab:taulq}The logarithmic (top) and quadratic (bottom) additive
$\tau$ on Wikipedia.}
\end{table}

\begin{table}[t]
\centering\small
\begin{tabular}{l|lllll}
&\multicolumn{1}{c}{Ind.} & \multicolumn{1}{c}{PR} & \multicolumn{1}{c}{Katz} & \multicolumn{1}{c}{Harm.} &  \multicolumn{1}{c}{Cl.} \\
\hline
Indegree & 1  & \cellcolor[gray]{0.95} 0.42 & \cellcolor[gray]{0.44} 0.93 & \cellcolor[gray]{0.88} 0.55 & \cellcolor[gray]{0.94} 0.43\\
PageRank & \cellcolor[gray]{0.95} 0.42 & 1  & \cellcolor[gray]{0.97} 0.36 & \cellcolor[gray]{1.00} 0.10 & \cellcolor[gray]{1.00} 0.18\\
Katz & \cellcolor[gray]{0.44} 0.93 & \cellcolor[gray]{0.97} 0.36 & 1  & \cellcolor[gray]{0.84} 0.61 & \cellcolor[gray]{0.92} 0.49\\
Harmonic & \cellcolor[gray]{0.88} 0.55 & \cellcolor[gray]{1.00} 0.10 & \cellcolor[gray]{0.84} 0.61 & 1  & \cellcolor[gray]{0.56} 0.86\\
Closeness & \cellcolor[gray]{0.94} 0.43 & \cellcolor[gray]{1.00} 0.18 & \cellcolor[gray]{0.92} 0.49 & \cellcolor[gray]{0.56} 0.86 & 1 \\
\end{tabular}\\[2ex]
\begin{tabular}{l|lllll}
&\multicolumn{1}{c}{Ind.} & \multicolumn{1}{c}{PR} & \multicolumn{1}{c}{Katz} & \multicolumn{1}{c}{Harm.} &  \multicolumn{1}{c}{Cl.} \\
\hline
Indegree & 1  & \cellcolor[gray]{0.49} 0.90 & \cellcolor[gray]{0.33} 0.98 & \cellcolor[gray]{0.48} 0.91 & \cellcolor[gray]{1.00} 0.10\\
PageRank & \cellcolor[gray]{0.49} 0.90 & 1  & \cellcolor[gray]{0.52} 0.88 & \cellcolor[gray]{0.62} 0.81 & \cellcolor[gray]{0.81} 0.64\\
Katz & \cellcolor[gray]{0.33} 0.98 & \cellcolor[gray]{0.52} 0.88 & 1  & \cellcolor[gray]{0.45} 0.92 & \cellcolor[gray]{1.00} 0.11\\
Harmonic & \cellcolor[gray]{0.48} 0.91 & \cellcolor[gray]{0.62} 0.81 & \cellcolor[gray]{0.45} 0.92 & 1  & \cellcolor[gray]{1.00} 0.18\\
Closeness & \cellcolor[gray]{1.00} 0.10 & \cellcolor[gray]{0.81} 0.64 & \cellcolor[gray]{1.00} 0.11 & \cellcolor[gray]{1.00} 0.18 & 1 \\
\end{tabular}
\caption{\label{tab:tauhollywood}Kendall's $\tau$ (top) and $\tauh$ (bottom) on
the Hollywood co-starship graph.}
\end{table}

\begin{table}[t]
\centering\small
\begin{tabular}{l|lllll}
&\multicolumn{1}{c}{Ind.} & \multicolumn{1}{c}{PR} & \multicolumn{1}{c}{Katz} & \multicolumn{1}{c}{Harm.} &  \multicolumn{1}{c}{Cl.} \\
\hline
Indegree & 1  & \cellcolor[gray]{0.75} 0.71 & \cellcolor[gray]{0.51} 0.89 & \cellcolor[gray]{0.84} 0.61 & \cellcolor[gray]{0.89} 0.54\\
PageRank & \cellcolor[gray]{0.75} 0.71 & 1  & \cellcolor[gray]{0.80} 0.66 & \cellcolor[gray]{0.91} 0.50 & \cellcolor[gray]{0.91} 0.50\\
Katz & \cellcolor[gray]{0.51} 0.89 & \cellcolor[gray]{0.80} 0.66 & 1  & \cellcolor[gray]{0.77} 0.69 & \cellcolor[gray]{0.86} 0.59\\
Harmonic & \cellcolor[gray]{0.84} 0.61 & \cellcolor[gray]{0.91} 0.50 & \cellcolor[gray]{0.77} 0.69 & 1  & \cellcolor[gray]{0.55} 0.86\\
Closeness & \cellcolor[gray]{0.89} 0.54 & \cellcolor[gray]{0.91} 0.50 & \cellcolor[gray]{0.86} 0.59 & \cellcolor[gray]{0.55} 0.86 & 1 \\
\end{tabular}\\[2ex]
\begin{tabular}{l|lllll}
&\multicolumn{1}{c}{Ind.} & \multicolumn{1}{c}{PR} & \multicolumn{1}{c}{Katz} & \multicolumn{1}{c}{Harm.} &  \multicolumn{1}{c}{Cl.} \\
\hline
Indegree & 1  & \cellcolor[gray]{0.48} 0.91 & \cellcolor[gray]{0.37} 0.96 & \cellcolor[gray]{0.74} 0.72 & \cellcolor[gray]{0.99} 0.20\\
PageRank & \cellcolor[gray]{0.48} 0.91 & 1  & \cellcolor[gray]{0.50} 0.90 & \cellcolor[gray]{0.62} 0.81 & \cellcolor[gray]{0.77} 0.69\\
Katz & \cellcolor[gray]{0.37} 0.96 & \cellcolor[gray]{0.50} 0.90 & 1  & \cellcolor[gray]{0.67} 0.78 & \cellcolor[gray]{1.00} 0.15\\
Harmonic & \cellcolor[gray]{0.74} 0.72 & \cellcolor[gray]{0.62} 0.81 & \cellcolor[gray]{0.67} 0.78 & 1  & \cellcolor[gray]{0.97} 0.35\\
Closeness & \cellcolor[gray]{0.99} 0.20 & \cellcolor[gray]{0.77} 0.69 & \cellcolor[gray]{1.00} 0.15 & \cellcolor[gray]{0.97} 0.35 & 1 \\
\end{tabular}
\caption{\label{tab:tauhcc}Kendall's $\tau$ (top) and $\tauh$ (bottom) on the on
the Common Crawl host graph.}
\end{table}

\begin{table*}[htb]
\centering
\scriptsize
\begin{tabular}{lllll}
\multicolumn{1}{c}{Indegree} & \multicolumn{1}{c}{PageRank} & \multicolumn{1}{c}{Katz} & \multicolumn{1}{c}{Harmonic} & \multicolumn{1}{c}{Closeness} \\
\hline
Shatner, William & Jeremy, Ron & Shatner, William & Sheen, Martin & \textbf{\"Ostlund, Claes G\"oran} \\
Flowers, Bess & Hitler, Adolf & Sheen, Martin & Clooney, George & \textbf{\"Ostlund, Catarina} \\
Sheen, Martin & \textbf{Kaufman, Lloyd} & Hanks, Tom & Jackson, Samuel L. & \textbf{von Preu\ss en, Oskar Prinz} \\
Reagan, Ronald (I) & Bush, George W. & Williams, Robin (I) & Hopper, Dennis & \textbf{von Preu\ss en, Georg Friedrich} \\
Clooney, George & Reagan, Ronald (I) & Clooney, George & Hanks, Tom & \textbf{von Mannstein, Robert Grund} \\
Jackson, Samuel L. & Clinton, Bill (I) & Reagan, Ronald (I) & Stone, Sharon (I) & \textbf{von Mannstein, Concha} \\
Williams, Robin (I) & Sheen, Martin & Willis, Bruce & Brosnan, Pierce & \textbf{von der Busken, Mart} \\
Hanks, Tom & \textbf{Rochon, Debbie} & Jackson, Samuel L. & Hitler, Adolf & \textbf{van der Putten, Thea} \\
Jeremy, Ron & \textbf{Kennedy, John F.} & Stone, Sharon (I) & \textbf{McDowell, Malcolm} & \textbf{de la Bruheze, Joel Albert} \\
Hitler, Adolf & Hopper, Dennis & Freeman, Morgan (I) & Williams, Robin (I) & \textbf{de la Bruheze, Emile} \\
Willis, Bruce & \textbf{Nixon, Richard} & Flowers, Bess & \textbf{De Niro, Robert} & \textbf{te Riele, Marloes} \\
Clinton, Bill (I) & \textbf{Estevez, Joe} & Brosnan, Pierce & Willis, Bruce & \textbf{de Reijer, Eric} \\
Freeman, Morgan (I) & Shatner, William & Douglas, Michael (I) & \textbf{Hopkins, Anthony} & \textbf{des Bouvrie, Jan} \\
Hopper, Dennis & Jackson, Samuel L. & Madonna (I) & Madonna (I) & \textbf{de Klijn, Judith} \\
Stone, Sharon (I) & \textbf{Stewart, Jon (I)} & Travolta, John & \textbf{Lee, Christopher (I)} & \textbf{de Freitas, Lu\'is (II)} \\
Madonna (I) & \textbf{Carradine, David (I)} & Hopper, Dennis & Douglas, Michael (I) & \textbf{de Freitas, Lu\'is (I)} \\
Bush, George W. & Asner, Edward & Ford, Harrison (I) & \textbf{Sutherland, Donald (I)} & \textbf{Zuu, Winnie Otondi} \\
\textbf{Harris, Sam (II)} & \textbf{Zirnkilton, Steven} & Asner, Edward & Freeman, Morgan (I) & \textbf{Zuu, Emmanuel Dahngbay} \\
Brosnan, Pierce & \textbf{Colbert, Stephen} & \textbf{MacLaine, Shirley} & \textbf{Stallone, Sylvester} & \textbf{Zilbersmith, Carla} \\
Travolta, John & \textbf{Madsen, Michael (I)} & Clinton, Bill (I) & Ford, Harrison (I) & \textbf{Zilber, Mac} \\
\end{tabular}
\caption{\label{tab:tophollywood}Top 20 pages of the Hollywood co-starship
graph.}
\end{table*}

\begin{table*}[htb]
\centering
\scriptsize
\begin{tabular}{lllll}
\multicolumn{1}{c}{Indegree} & \multicolumn{1}{c}{PageRank} & \multicolumn{1}{c}{Katz} & \multicolumn{1}{c}{Harmonic} & \multicolumn{1}{c}{Closeness} \\
\hline
\texttt{wordpress.org} & \texttt{gmpg.org} & \texttt{wordpress.org} & \texttt{youtube.com} & \textbf{\texttt{0--p.com}} \\
\texttt{youtube.com} & \texttt{wordpress.org} & \texttt{youtube.com} & \texttt{en.wikipedia.org} & \textbf{\texttt{0-0-0-0-0-0-0.indahiphop.ru}} \\
\texttt{gmpg.org} & \texttt{youtube.com} & \texttt{gmpg.org} & \texttt{twitter.com} & \textbf{\texttt{0-0-1.i.tiexue.net}} \\
\texttt{en.wikipedia.org} & \textbf{\texttt{livejournal.com}} & \texttt{en.wikipedia.org} & \texttt{google.com} & \textbf{\texttt{0-00cigarettes.info}} \\
\texttt{tumblr.com} & \texttt{tumblr.com} & \texttt{tumblr.com} & \texttt{wordpress.org} & \textbf{\texttt{0-0mos00.hi5.com}} \\
\texttt{twitter.com} & \texttt{en.wikipedia.org} & \texttt{twitter.com} & \texttt{flickr.com} & \textbf{\texttt{0-0new0-0.hi5.com}} \\
\texttt{google.com} & \texttt{twitter.com} & \texttt{google.com} & \texttt{facebook.com} & \textbf{\texttt{0-0sunny0-0.hi5.com}} \\
\texttt{flickr.com} & \textbf{\texttt{networkadvertising.org}} & \texttt{flickr.com} & \textbf{\texttt{apple.com}} & \textbf{\texttt{0-1.i.tiexue.net}} \\
\texttt{rtalabel.org} & \textbf{\texttt{promodj.com}} & \texttt{rtalabel.org} & \texttt{vimeo.com} & \textbf{\texttt{0-1.sxsy.co}} \\
\texttt{wordpress.com} & \textbf{\texttt{skriptmail.de}} & \texttt{wordpress.com} & \texttt{creativecommons.org} & \textbf{\texttt{0-2.paparazziwannabe.com}} \\
\texttt{mp3shake.com} & \textbf{\texttt{parallels.com}} & \texttt{mp3shake.com} & \textbf{\texttt{amazon.com}} & \textbf{\texttt{0-311.cn}} \\
\texttt{w3schools.com} & \textbf{\texttt{tistory.com}} & \texttt{w3schools.com} & \textbf{\texttt{adobe.com}} & \textbf{\texttt{0-360.rukazan.ru}} \\
\texttt{domains.lycos.com} & \texttt{google.com} & \texttt{creativecommons.org} & \textbf{\texttt{myspace.com}} & \textbf{\texttt{0-5days.com}} \\
\texttt{staff.tumblr.com} & \texttt{miibeian.gov.cn} & \texttt{staff.tumblr.com} & \textbf{\texttt{w3.org}} & \textbf{\texttt{0-5days.net}} \\
\texttt{club.tripod.com} & \texttt{phpbb.com} & \texttt{domains.lycos.com} & \textbf{\texttt{bbc.co.uk}} & \textbf{\texttt{0-5kalibr.pdj.ru}} \\
\texttt{creativecommons.org} & \textbf{\texttt{blog.fc2.com}} & \texttt{club.tripod.com} & \textbf{\texttt{nytimes.com}} & \textbf{\texttt{0-9-0-4-4-9.promoradio.ru}} \\
\texttt{vimeo.com} & \textbf{\texttt{tw.yahoo.com}} & \texttt{vimeo.com} & \textbf{\texttt{yahoo.com}} & \textbf{\texttt{0-9-0-9.dbass.ru}} \\
\texttt{miibeian.gov.cn} & \texttt{w3schools.com} & \texttt{miibeian.gov.cn} & \textbf{\texttt{microsoft.com}} & \textbf{\texttt{0-9-0-9.promodj.ru}} \\
\texttt{facebook.com} & \texttt{wordpress.com} & \texttt{facebook.com} & \textbf{\texttt{guardian.co.uk}} & \textbf{\texttt{0-9-1125.i.tiexue.net}} \\
\texttt{phpbb.com} & \texttt{domains.lycos.com} & \texttt{phpbb.com} & \textbf{\texttt{imdb.com}} & \textbf{\texttt{0-9-7-16.software.informer.com}} \\
\end{tabular}
\caption{\label{tab:topcc}Top 20 hosts of the Common Crawl host graph.}
\end{table*}

\subsection{PageRank and closeness}

It is now time to examine the mysteriously high $\tauh$ between PageRank and
closeness we found in all our graphs. When we first computed our correlation
tables, we were puzzled by its value. The phenomenon is interesting for three
reasons: first, it has never been reported---using standard,
unweighted indices this correlation is simply undetectable; 
second, it was known for techniques based on singular
vectors~\cite{LeMSALSATE}; third, we know \emph{exactly} the cause of this
correlation, because the only real difference between
harmonic and closeness centrality is the score assigned to
nodes unreachable from the giant component. We thus expect to discover an unsuspected correlation
between the way PageRank and closeness rank these nodes.

\begin{figure}[htb]
\centering
\includegraphics[scale=.75]{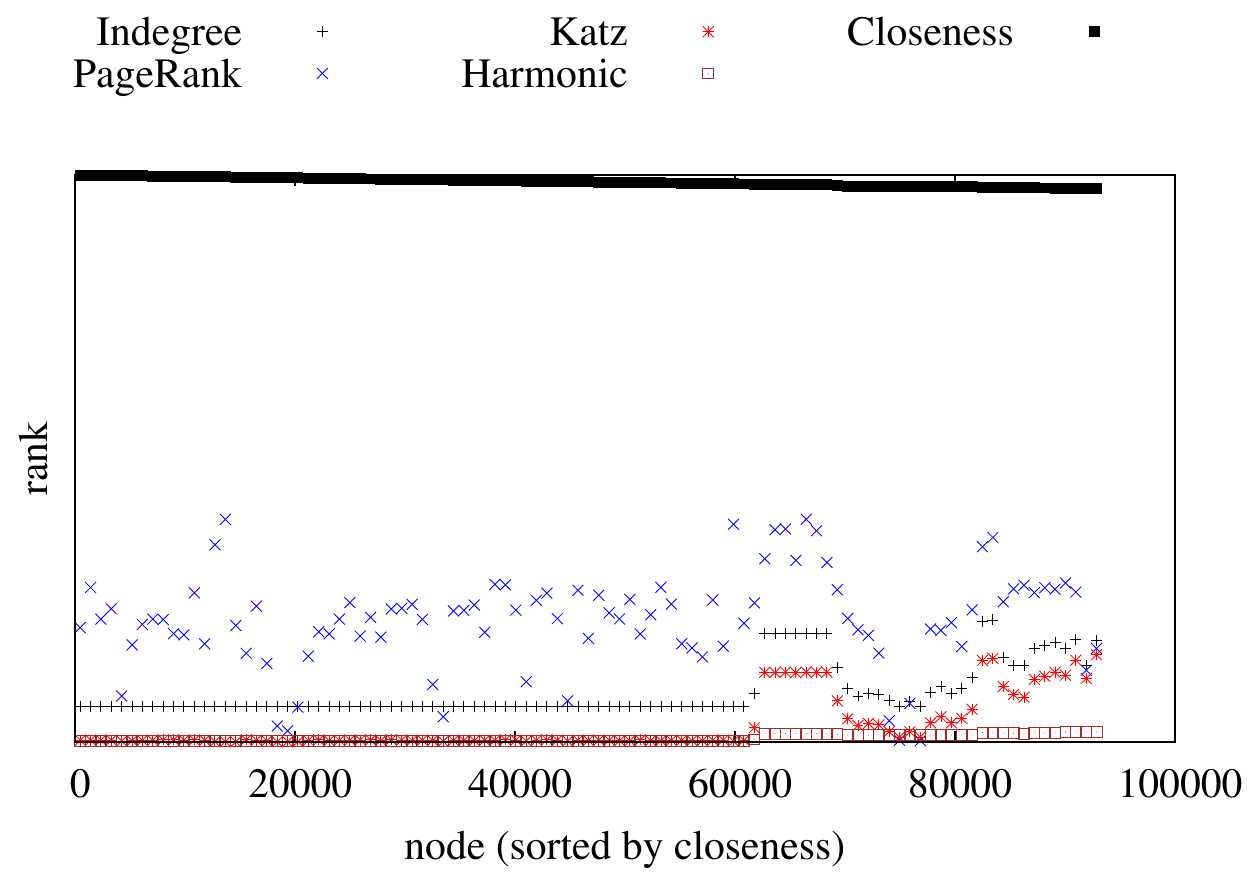}
\caption{\label{fig:wikipediaunreach}Ranks of
components unreachable from the giant component of the Wikipedia graph.}
\end{figure}

To have a visual understanding of what is happening, we created
Figure~\ref{fig:wikipediaunreach}, \ref{fig:hollywoodunreach}
and~\ref{fig:ccunreach} in the following way: first, we isolated the nodes that
are unreachable from the giant component (in the case of Hollywood, which is
undirected, these nodes form separate components), omitting nodes which have
indegree zero, modulo loops (as all measures give the lowest score to such
nodes); then, we sorted the nodes in order of decreasing closeness
rank, and plotted for each node its rank following the other measures (we average ranks on
block of nodes so to contain the number of points in the plots). A point of high
abscissa in the figures implies a high rank.

All three pictures show clearly that \emph{PageRank assigns a preposterously
high rank to to nodes belonging to components that are unreachable from the
giant component}.
This behavior is actually related to PageRank's \emph{insensitivity to size}:
for instance, in a graph made of two components, one of which is a 3-clique and
the other a $k$-clique, the PageRank score of all nodes is  $1/(3+k)$,
independently of $k$. This explains why small dense
components end up being so highly ranked. The same phenomenon is at
work when the community around Lloyd Kaufman's production company (very small and very dense) 
is attributed such a great importance that its elements
make their way to the very top ranks (even 
if Kaufman himself has indegree rank 219 and Debbie Rochon 1790).

We remark that the gap in rank is lower in the case of Wikipedia, but this is
fully in concordance with the higher baseline value of Kendall's $\tau$.

\section{Conclusions}

In this paper, motivated by the need to understand similarity between 
score vectors, such as those generated by centrality measures on large graphs,
we have defined a weighted version of Kendall's $\tau$ starting from its 1945 definition
for scores with ties. We have developed the mathematical properties of our
generalization following a mathematical similarity with internal products, and
showing that for a wide range of weighting schemes our new measure behaves as
expected, providing a correlation index between -1 and 1, and hitting boundaries
only for opposite or equivalent scores.

\begin{figure}[htb]
\centering
\includegraphics[scale=.75]{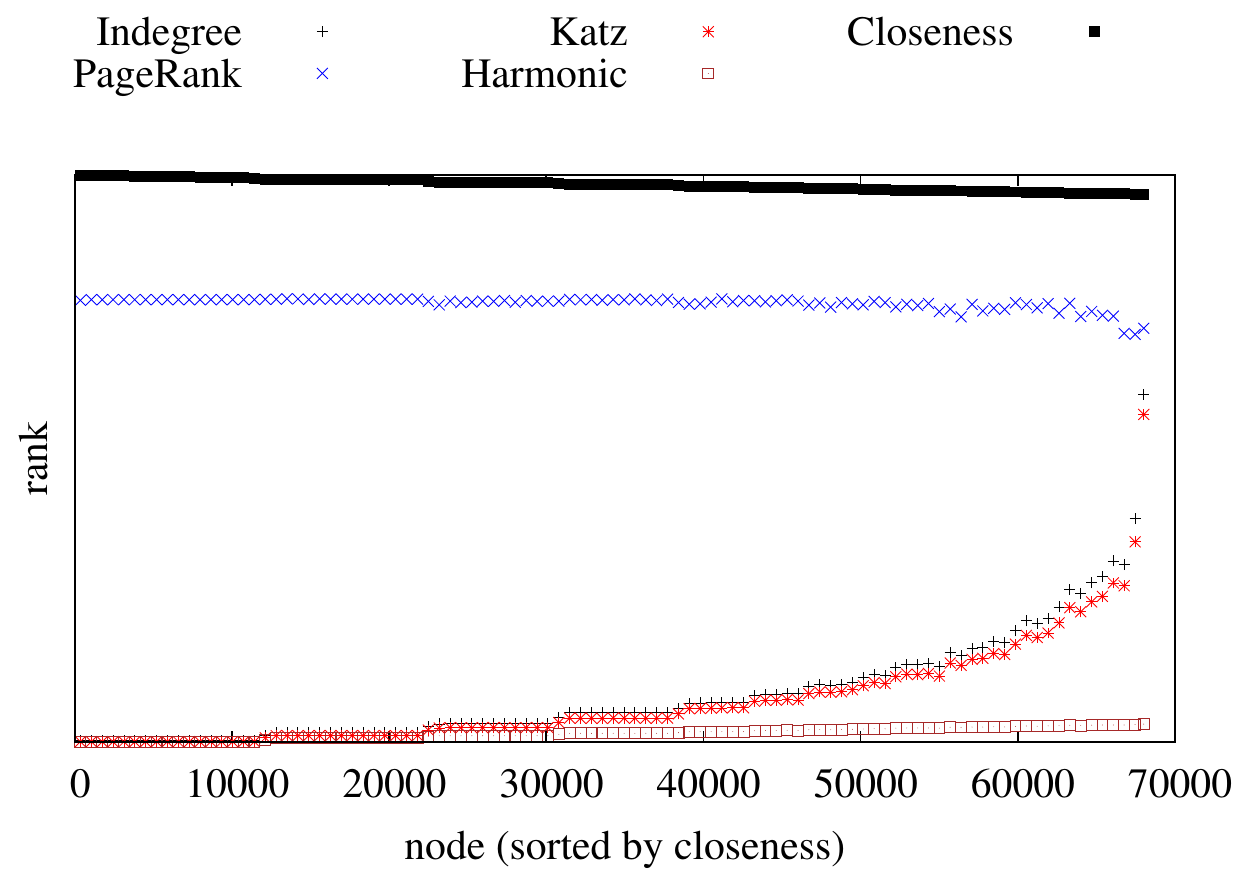}
\caption{\label{fig:hollywoodunreach}Ranks of
components of the Hollywood co-starship graph, excluding the giant component.}
\end{figure}

\begin{figure}[htb]
\centering
\includegraphics[scale=.75]{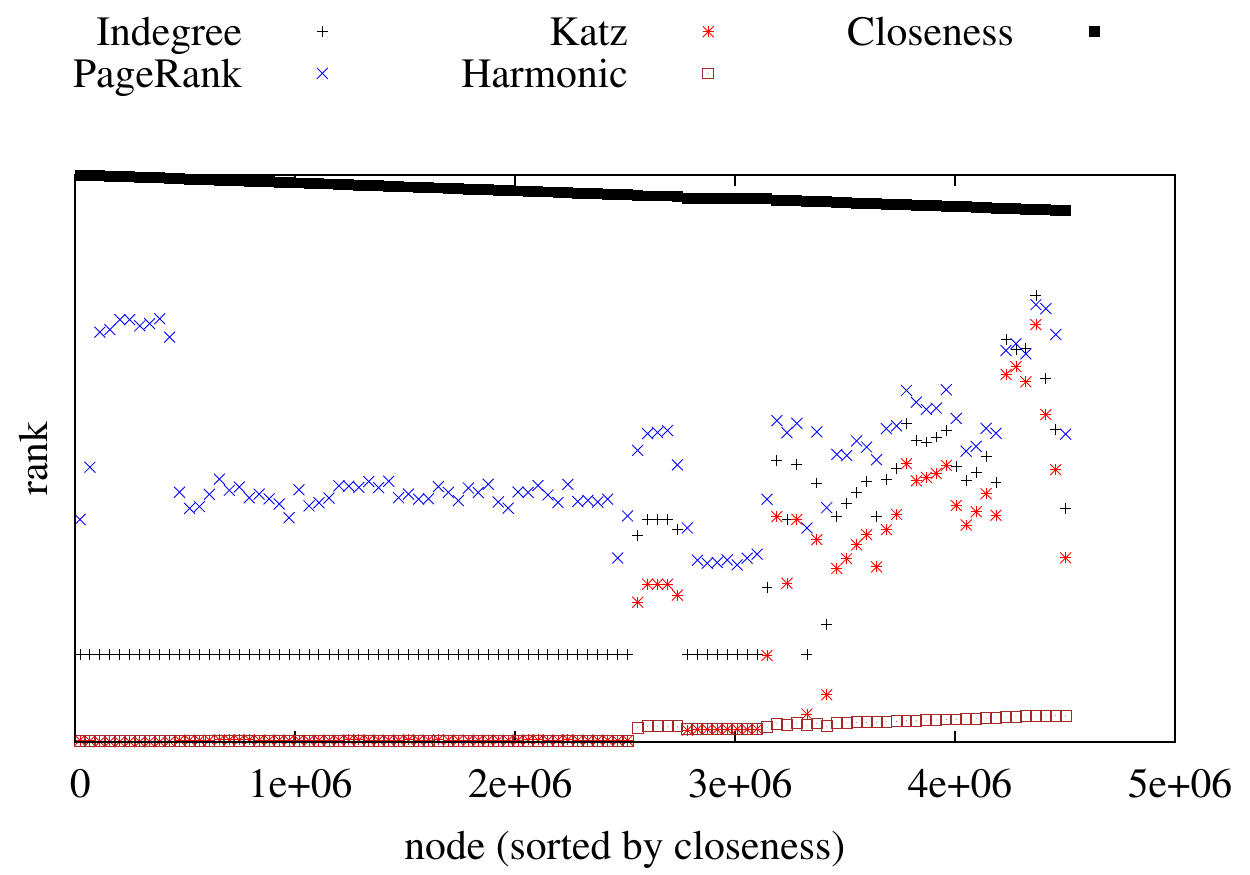}
\caption{\label{fig:ccunreach}Ranks of
components unreachable from the giant component of the Common Crawl host graph.}
\end{figure}

We have then proposed families of weighting schemes that are intuitively
appealing, and showed that they can be computed in time $O(n \log n)$ using 
a generalization of Knight's algorithm, which makes them suitable for
large-scale applications. The fact that the main cost of the algorithm is a
modified stable sort makes it possible to apply standard techniques to run
the algorithm exploiting multicore parallelism, or in distributed environment
such as MapReduce~\cite{DeGMR}. The algorithm can be also used to compute AP correlation~\cite{YARNRCCIR}.

In search for a confirmation of our mathematical intuition, we have then 
applied our measure of choice $\tauh$ (which uses additive hyperbolic
weights) to diverse graph such as Wikipedia, the Hollywood co-starship graph and a large host graph,
finding that, contrarily to Kendall's $\tau$, $\tauh$ provides results that are
consistent with an anecdotal examination of lists of top elements.

Our measure was also able to discover a previously unnoticed correlation
between PageRank and closeness on small components that are unreachable from the giant
component, providing a quantifiable account of the strong bias of PageRank
towards small-sized dense communities. This bias might well be the cause of the 
repeatedly assessed better performance of indegree w.r.t.~PageRank in ranking
documents~\cite{NZTHW,CHUPFFPI}, as in all our experiments the $\tauh$
between PageRank and indegree is above $0.9$.

A generalization similar to the one described in this paper can be also applied
to \emph{Goodman--Kruskal's $\gamma$}, which in the notation of
Section~\ref{sec:computing} is just $(C-D)/(C+D)$. The problem with $\gamma$ is
that the ranking of ties is only implicit (they are simply not counted).
Thus, the value of $w$ on tied pairs does not appear at all in the above
formula. This ``forgetful'' behavior can lead to unnatural results, and suggests
the Kendall's $\tau$ is a better candidate for this approach.

We remark that an interesting application of additive hyperbolic weighting
is that of measuring the correlation between top $k$ lists. By assuming that the rank
function $\rho$ returns $\infty$ after rank $k$, we obtain a correlation index 
that weighs zero pairs outside the top
$k$, weights only ``by one side'' pairs with just one element outside the top
$k$, and weights fully pairs whose elements are within the top $k$.
Formula~(\ref{eq:symm}) could provide then in principle a finer assessment
than, for instance, the modified Kendall's $\tau$ proposed in~\cite{FKSCTKL},
as the position of each element inside the list, beside the fact that it
appears in the top $k$ or not, would be a source of weight. We leave the
analysis of such a correlation measure for future work.

\hyphenation{ Vi-gna Sa-ba-di-ni Kath-ryn Ker-n-i-ghan Krom-mes Lar-ra-bee
  Pat-rick Port-able Post-Script Pren-tice Rich-ard Richt-er Ro-bert Sha-mos
  Spring-er The-o-dore Uz-ga-lis }

\end{document}